\documentclass[letterpaper, 10 pt, conference]{ieeeconf} 
 
\IEEEoverridecommandlockouts                              
 
\usepackage[utf8]{inputenc}  
\usepackage{mathtools}
\mathtoolsset{showonlyrefs,showmanualtags}
\usepackage{amssymb,amsthm}
\usepackage{dsfont}
\usepackage{booktabs}
\usepackage{color}
\usepackage{bbold} 
\usepackage[final]{graphicx}
\usepackage[yyyymmdd,hhmmss]{datetime}
\usepackage{bm,upgreek}
\usepackage{tablefootnote}
\usepackage{nth}
\usepackage{cite}

\newtheorem{Assumption}{Assumption}
\newtheorem{Theorem}{Theorem}

\newtheorem{Remark}{Remark}

\newtheorem{Lemma}{Lemma}

\newtheorem{Proposition}{Proposition}
\newtheorem{Problem}{Problem}

\DeclareMathOperator{\VEC}{vec}

\DeclareMathOperator{\COV}{Cov}
\newcommand{\Exp}[1]{\mathbb{E}[ #1]}
\newcommand{\Prob}[1]{\mathbb{P}(#1)} 
\newcommand{\ID}[1]{ \mathbb{1} (#1 ) }

\newcommand{\Compress}{\medmuskip=0mu
\thinmuskip=0mu
\thickmuskip=0mu}

\usepackage{tikz}
 \usetikzlibrary{hobby}
 \usetikzlibrary{patterns}
\usetikzlibrary{graphs,graphs.standard}
\usetikzlibrary{shapes,snakes}

\tikzset{%
  every neuron/.style={
    circle,
    draw,
    minimum size=.7cm
  },
  neuron missing/.style={
    draw=none, 
    scale=3,
    text height=0.333cm,
    execute at begin node=\color{black}$\vdots$
  },
}

%

\title{Deep Structured Teams in Arbitrary-Size Linear
Networks: Decentralized  Estimation, Optimal Control and Separation Principle}

\author{Jalal~Arabneydi and Amir G. Aghdam
 \thanks{
This work was supported in part by funding from the Innovation for  
Defence Excellence and Security (IDEaS) program from the Department of  
National Defence (DND). Any opinions and conclusions in this work are  those of the authors and do not reflect the views,  
positions, or policies of - and are not endorsed by - IDEaS, DND, or  
the Government of Canada.
}  
\thanks{Jalal Arabneydi, and Amir G. Aghdam are with the  Department of Electrical and Computer Engineering, 
        Concordia University, 1455 de Maisonneuve Blvd. West, Montreal, QC, Canada, Postal Code: H3G 1M8.  Email: {\tt\small jalal.arabneydi@mail.mcgill.ca},        
        {\tt\small aghdam@ece.concordia.ca}}%
}


\begin{document}

\maketitle
\thispagestyle{empty}
\pagestyle{empty}

\vspace*{-5cm}{\footnotesize{Proceedings of IEEE  Conference on Decision and Control, 2021.}}
\vspace*{4.2cm}

\begin{abstract}
In this article, we introduce decentralized Kalman filters for linear quadratic deep structured teams. The agents in deep structured teams are coupled in dynamics, costs and measurements through a set of linear regressions of the states and actions (also called deep states and deep actions). The information structure is decentralized, where every agent observes a noisy measurement of its local state and the global deep state. Since the number of agents is often very large in deep structured teams, any naive approach to finding an optimal Kalman filter suffers from the curse of dimensionality. Moreover, due to the decentralized nature of information structure, the resultant optimization problem is non-convex, in general, where non-linear strategies can outperform linear ones. However,   we prove that the optimal strategy is linear  in the local state estimate as well as the  deep state  estimate and can be efficiently  computed by two scale-free Riccati equations  and Kalman filters.    We propose a bi-level orthogonal approach across both space and time levels  based on a gauge transformation technique to achieve the above result.
 We  also establish a separation  principle between optimal control and optimal estimation.  Furthermore, we show that as the number of agents goes to infinity,  the Kalman gain associated with  the deep state estimate converges to zero  at a rate inversely proportional to the number of agents. This leads to a fully decentralized approximate strategy where every agent  predicts the deep state by  its conditional and unconditional expected value, also known as the certainty equivalence approximation and (weighted) mean-field approximation, respectively. 
\end{abstract}
\section{Introduction}

In recent years, there has been a growing interest in large-scale systems such as  social networks, epidemics,  smart grids, economics and  robotics, to name only a few. 
In such systems, a large number of decision-makers interact with each other in order to minimize a common cost function, where  the global decision process (i.e., system-level decision) is a manifestation of many different local decision processes (i.e., agent-level decisions). To be able to coordinate the global process based on the local processes, the standard  approach is to consider \emph{centralized information}.  On the other hand, centralized information is not desirable in large-scale systems due to physical and economic limitations. Hence in practice,  the information set of each local decision-maker is  often  different,  leading to a discrepancy in perspective.  Since  it is  conceptually  challenging to establish  coordination among  agents  with different viewpoints,    \emph{decentralized systems} do not typically  admit   a globally optimal  solution.

In general, information structures can be categorized into three classes: \textit{classical} (centralized),~\textit{partially nested} (semi-centralized) and \textit{non-classical} (decentralized). When every agent knows the history of the actions and observations of all agents, the information structure is called \emph{classical}. On the other hand,  when every agent knows the history of the actions and observations of all agents whose actions affect its observations, the information structure is called partially nested; 
any other information structure is called non-classical~\cite{Witsenhausen1971separation,ho1972team}. It is well-known that the optimal strategy is an affine function of the observations in linear quadratic Gaussian (LQG) systems under classical and partially-nested information structures. However, this is not  the case for non-classical information structures, in general. For example,  it is shown in~\cite{Witsenhausen1968Counterexample} that solving a simple two-agent  LQG model with decentralized information   ends up with   a non-convex optimization problem where non-linear strategies outperform linear ones.  Even if the attention is  restricted to linear strategies, the best linear solution is not necessarily a convex optimization problem solution, except for a few special cases such as quadratic invariance~\cite{rotkowitz2006characterization}. In addition,    the best linear strategy might not even have a finite-dimensional representation~\cite{whittle1974optimal}. For more counterexamples on the complexity of  decentralized linear problems, see~\cite{Lipsa2011counterexample,Yuksel2009counterexample}.

In this paper, we study a newly emergent class of large-scale decentralized control systems called \emph{deep structured teams}~\cite{Jalal2020CCTA,Jalal2021CDC_MPC, Jalal2019MFT,Jalal2019risk,Jalal2020Automatica_On,
Vida2020CDC,Masoud2020CDC,Jalal2020Nash}, which may be viewed as a generalization of the notion of weighted mean-field teams  introduced in~\cite{arabneydi2016new}.  Since the closest
model to such systems  is feed-forward deep neural networks, we refer to them as \emph{deep structured teams/games}. In particular, agents in deep structured models interact with each other through a set of linear regressions of the states and actions of all agents,  which is similar in spirit to  the interaction of  neurons in feed-forward deep neural networks.   In addition,  we show  in~\cite{Jalal2019risk} that the secret ingredient of finding a low-dimensional solution in such
models is related to \emph{invariance/equivariance symmetry}, which is the backbone of deep learning. Furthermore, we  demonstrate in~\cite{Jalal2021CDC_MPC} that today's  most common
 feed-forward deep neural networks (i.e., those with  rectified linear unit activation function) may be viewed as a special case
of deep structured teams,  where
layers are time steps and neurons are simple integrator agents whose goal is to collaborate in order to minimize a common loss (cost) function.  For more applications of deep structured models,  the  reader is referred to  reinforcement learning~\cite{Jalal2020CCTA,
    Vida2020CDC,Masoud2020CDC}, nonzero-sum game~\cite{Jalal2020Automatica_On,Jalal2020Nash},  minmax optimization~\cite{Jalal2019LCSS},  leader-followers~\cite{JalalCCECE2018,JalalCDC2018}, epidemics~\cite{JalalACC2019}, smart grids~\cite{JalalACC2018}, mean-field teams~\cite{JalalCDC2015, JalalCDC2017,JalalCDC2014,Jalal2017linear},   and networked  estimation~\cite{Jalal2019TSNE,Jalal2018fault}.

Herein, we focus on  LQG deep-structured systems  under a non-classical information structure that is neither partially nested nor quadratically invariant. In particular, we  introduce a \emph{bi-level orthogonal approach} to finding a low-dimensional solution using the gauge transformation technique, initially proposed in~\cite{arabneydi2016new}. More precisely, we prove that the optimal strategy is an affine function of the observations, where controllers' and observers' gains are computed  by two scale-free  local and  global Riccati equations  and Kalman filters, respectively. The derivation of  the proposed decentralized  Kalman filters is  different from the standard  one in~\cite{kalman1960new,simon2006optimal}.  For example, our proof technique holds only for i.i.d. random variables  and  naturally for those non-i.i.d.  variables that are \emph{conditionally} i.i.d. relative to some  latent variables. For instance, exchangeable random variables (that are not necessarily i.i.d.) behave as conditionally i.i.d. variables in  the infinite-population model according to De Finetti's theorem~\cite{diaconis1980finetti}. To the best of our knowledge, this is the first result establishing a tractable optimal strategy under noisy measurements for a class of large-scale control systems  with non-classical  information structure.  However, at this stage, it is unclear  whether or not such a low-dimensional representation exists for general non-i.i.d. random variables. 

The rest of the paper is organized as follows. In Section~\ref{sec:problem}, we formulate an LQG deep structured team problem with noisy observations. In Section~\ref{sec:estimation},  we  derive our Kalman filters  in five steps. In Section~\ref{sec:main}, we present the main result of the paper followed by  two extensions to infinite-population approximation and least square estimation. In Section~\ref{sec:conclusions},  we summarize and conclude the paper.

\section{Problem formulation}\label{sec:problem}
Throughout the paper, $\mathbb{R}$ and $\mathbb{N}$ refer to  the sets of real and natural numbers, respectively. Given any $n \in \mathbb{N}$, $\mathbb{N}_n=\{1,2,\ldots,n\}$ and  $x_{1:n}=\{x_1,\ldots,x_n\}$. For vectors $x, y, z$, $\VEC(x,y,z)=[x^\intercal, y^\intercal, z^\intercal]^\intercal$.  
$\COV(x)=\Exp{ (x-\Exp{x})  (x-\Exp{x})^\intercal}$ is the covariance matrix. $\mathcal{N}(\mu, \Sigma)$ is a multi-dimensional Gaussian probability distribution with mean $\mu$ and covariance matrix $\Sigma$. In addition, $\mathbf 0$, $\mathbf 1$, $\mathbf I$ represent  matrix zero where all arrays are zero, matrix one where all arrays are one, and identity matrix, respectively.  $\otimes$ is Kronecker product. Given  vector $\mathbf x=\VEC(x_1,\ldots,x_n) \in \mathbb{R}^{n}$, $\mathbf x^{-i} \in \mathbb{R}^{n-1}$ denotes  $\mathbf x$ without the $i$-th component.  

Consider a system consisting of $n \in \mathbb{N}_n$ agents. Let  $x^i_t \in \mathbb{R}^{d_x}$, $u^i_t \in \mathbb{R}^{d_u}$ and $w^i_t \in \mathbb{R}^{d_w}$, $d_x, d_u, d_w \in \mathbb{N}$, denote the state, action and noise of agent $i \in \mathbb{N}_n$  at time $t \in \mathbb{N}$, respectively.  For each agent $i \in \mathbb{N}_n$,  define
\begin{equation}\label{eq:deep_state}
\bar x_t:=\frac{1}{n}\sum_{i=1}^n \alpha_i x^i_t, \quad \bar u_t:=\frac{1}{n}\sum_{i=1}^n \alpha_i u^i_t, \quad \bar w_t:=\frac{1}{n}\sum_{i=1}^n \alpha_i w^i_t.
\end{equation}

Without loss of generality, we assume that  the influence factors are normalized as follows: $\frac{1}{n} \sum_{i=1}^n \alpha_i^2=1$.
 Let the dynamics of agent $i \in \mathbb{N}$ at time $t \in \mathbb{N}$ be described by
\begin{equation}\label{eq:dynamics}
x^i_{t+1}=A_t x^i_t + B_t u^i_t + E_t w^i_t+  \alpha_i (\bar A_t \bar x_t + \bar B_t \bar u_t + \bar E_t \bar w_t),
\end{equation}
where matrices $A_t$, $B_t$, $E_t$, $\bar A_t$, $\bar B_t$ and $\bar E_t$ have appropriate dimensions. Denote by $y^i_t \in \mathbb{R}^{d_y}$ and $v^i_t \in \mathbb{R}^{d_v}$ the state observation and measurement noise of agent $i$  at time $t$,  respectively, such that
\begin{equation}
y^i_t=C_t x^i_t +S_t v^i_t + \alpha_i(\bar C_t \bar x_t + \bar S_t \bar v_t),
\end{equation}
where $C_t$, $S_t$, $\bar C_t$ and $\bar S_t$ have appropriate dimensions and 
\begin{equation}\label{eq:noisy_deep_state}
\bar y_t:=\frac{1}{n}\sum_{i=1}^n \alpha_i y^i_t, \quad \bar v_t:=\frac{1}{n}\sum_{i=1}^n \alpha_i v^i_t.
\end{equation} 
Following the terminology of deep structured models~\cite{Jalal2020CCTA,Jalal2019risk}, aggregate variables (linear regressions) defined in~\eqref{eq:deep_state} and~\eqref{eq:noisy_deep_state} are called \emph{deep} variables due to the fact that their  evolutions across time  horizon  are similar to   feed-forward deep neural networks. Hence, for ease of reference,  we refer to $\bar x_t$ as \emph{deep state}  at time~$t$ in the sequel.

We consider a non-classical (decentralized) information structure  called \emph{imperfect deep state sharing} (IDSS) such that the action of agent $i$ is given by
\begin{equation}
u^i_t=g^i_t(y^i_{1:t}, \bar y_{1:t}), \tag{IDSS}
\end{equation}
where  $g^i_t:\mathbb{R}^{2 t d_y} \rightarrow \mathbb{R}^{d_u}$. When $n\geq 3$, a salient property of IDSS structure  is that it provides \emph{natural encryption} of data in terms of noisy deep state (linear regression)  so that no agent knows  the local (private) state of other agents. Let $\mathbf x_t:=\VEC(x^1_t,\ldots,  x^n_t)$, $ \mathbf u_t:=\VEC(u^1_t,\ldots, u^n_t)$, $\mathbf w_t:=\VEC(w^1_t,\ldots,  w^n_t)$, $ \mathbf y_t:=\VEC(y^1_t,\ldots, y^n_t)$, and  $\mathbf v_t:=\VEC(v^1_t,\ldots,  v^n_t)$. The random variables $\{\mathbf x_1, \mathbf w_{1:T}, \mathbf v_{1:T}\}$ are  defined on a common probability space and are mutually independent across agents and control horizon. Furthermore, $x^i_1 \sim \mathcal{N}(\mu^x, \Sigma^x)$, $w^i_t \sim \mathcal{N}(\mathbf 0, \Sigma^w_t)$, $v^i_t \sim \mathcal{N}(\mathbf 0, \Sigma^v_t)$, $t \in \mathbb{N}_T$.
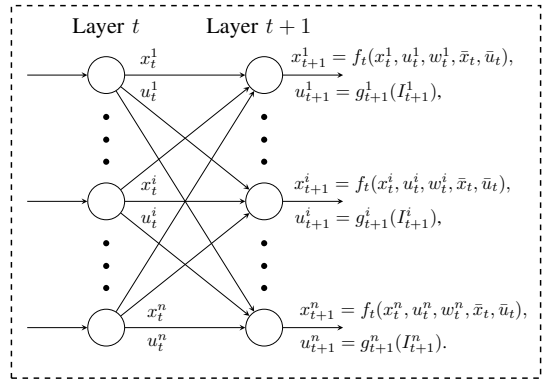
\begin{figure}[t!]
\hspace{0cm}
\scalebox{.7}{
\begin{tikzpicture}[x=1.5cm, y=1.2cm, >=stealth]
  \draw[font=\large](0,2.1cm) node {Layer $t$} ;
\draw[font=\large](2.9cm,2.1cm) node {Layer $t+1$} ;
\draw[font=\large](6.5cm,-5cm) node {} ;
\draw[black, thick, dashed] (-1.2,2.1) rectangle (5.5,-3.8);

\foreach \m/\l [count=\y] in {1,missing,2,missing,3}
  \node [every neuron/.try, neuron \m/.try] (input-\m) at (0,2-\y) {};

\foreach \m/\l [count=\y] in {1,missing,2,missing,3}
  \node [every neuron/.try, neuron \m/.try ] (output-\m) at (2,2-\y) {};

  \draw [<-] (input-1) -- ++(-1,0)
    node [above,midway] {\hspace{3.5cm}$x^1_t$}
        node [below,midway] {\hspace{3.5cm}$u^1_t$};
     \draw [<-] (input-2) -- ++(-1,0)
    node [above,midway] {\hspace{3.5cm}$x^i_t$}
        node [below,midway] {\hspace{3.5cm}$u^i_t$};
        
          \draw [<-] (input-3) -- ++(-1,0)
    node [above,midway] {\hspace{3.8cm}$x^{n}_t$}
        node [below,midway] {\hspace{3.8cm}$u^{n}_t$};

  \draw [->]  (output-1) -- ++(1,0)  node[above, midway]  
   {\hspace*{3.4cm}$x^1_{t+1}= f_t(x^1_t,u^1_t,w^1_t, \bar x_t, \bar u_t),$}
     node [below,midway] {\hspace{2.1cm}$u^1_{t+1} = g^1_{t+1}( I^1_{t+1})$,};   
      \draw [->]  (output-2) -- ++(1,0)  node[above, midway]     {\hspace*{3.4cm}$x^i_{t+1}= f_t(x^i_t,u^i_t, w^i_t, \bar x_t, \bar u_t),$}
     node [below,midway] {\hspace{2.1cm}$u^i_{t+1} = g^i_{t+1}( I^i_{t+1})$,};  
    
      \draw [->]  (output-3) -- ++(1,0)  node[above, midway]  { \hspace*{3.7cm} $x^{n}_{t+1}= f_t(x^{n}_t,u^{n}_t, w^{n}_t, \bar x_t, \bar u_t),$}
     node [below,midway] {\hspace{2.3cm}$u^{n}_{t+1} = g^{n}_{t+1}( I^{n}_{t+1})$.};     
     
    \foreach \i in {1,...,3}
  \foreach \j in {1,...,3}
    \draw [->] (input-\i) -- (output-\j);    
\end{tikzpicture}}
\vspace{-.4cm}
\caption{ The interaction (coupling)  between the agents in deep structured teams is similar in spirit  to that of neurons in a
 feed-forward deep neural network, where agents may be viewed as neurons and time steps as layers. In this paper, dynamics $f_t$ is an affine function and information set $I^i_t$ is imperfect deep state sharing.} 
\end{figure}
 Let $\mathbf g:=\{\{g^i_t\}_{i=1}^n\}_{t=1}^T$ denote the strategy of the system. We now define the team cost as follows.
\begin{equation}
J_n(\mathbf g):=\Exp{\frac{1}{n}\sum_{t=1}^T \sum_{i=1}^n c^i_t},
\end{equation}
where the per-step cost function of agent $i \in \mathbb{N}_n$ is
\begin{equation}\label{eq:per_step_cost}
c^i_t:=(x^i_t)^\intercal Q_t x^i_t+ (u^i_t)^\intercal R_t u^i_t+ \bar x_t^\intercal \bar Q_t \bar x_t + \bar u_t^\intercal \bar R_t \bar u_t.
\end{equation}
It is straightforward to consider cross terms of the form $\alpha_i x^i_t \tilde Q_t \bar x_t$ and  $\alpha_i u^i_t \tilde R_t \bar u_t$ in~\eqref{eq:per_step_cost}; see also~\cite[Remark 1]{Jalal2021CDC_MPC} for other straightforward extensions.

  We are interested in the following optimization problem.
\begin{Problem}\label{prob1}
Given any $n \in \mathbb{N}$, find the optimal strategy $\mathbf g^\ast$ such that for any strategy $\mathbf g$: $J_n(\mathbf g^\ast) \leq J_n(\mathbf g)$. 
\end{Problem}
\section{Derivation of Kalman filters}\label{sec:estimation}
Prior to delving into details, we present $3$ standard lemmas.

\begin{Lemma}\label{lemma:Gaussian_orth}
Let $x$ and $y$ be   Gaussian random variables with zero mean. The best non-linear strategy is  equal to the best linear one as far as  the minimum mean-square error  is concerned, i.e.,
$\Exp{x\mid y}=\Exp{x y^\intercal} \Exp{yy^\intercal}^{-1}y$.
In addition,
\begin{equation}
\Exp{ (x-\Exp{x\mid y})(x-\Exp{x\mid y})^\intercal }=\Exp{xx^\intercal}-\Exp{ x y^\intercal} \Exp{y y^\intercal} \Exp{ y x^\intercal}.
\end{equation}
Furthermore,  $x$ and $y$ are independent  if and only if $\Exp{x y^\intercal}=0$, i.e.  $x \perp y$.
\end{Lemma}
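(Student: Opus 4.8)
The plan is to prove the three assertions in an order that respects their logical dependence, treating the independence characterization as the foundational tool, since it is precisely what distinguishes the Gaussian case from a general distribution. I would first establish the last claim: for jointly Gaussian $x$ and $y$, the condition $\Exp{xy^\intercal}=0$ is equivalent to independence. The forward implication is immediate, as independence gives $\Exp{xy^\intercal}=\Exp{x}\Exp{y^\intercal}=0$. For the converse, I would invoke the joint characteristic function $\Exp{\exp(i s^\intercal x + i r^\intercal y)}$ (equivalently, the joint Gaussian density): when the cross-covariance block vanishes, the joint covariance matrix is block-diagonal, so the characteristic function factors into the product of the two marginals, which yields independence.

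Next I would construct the candidate linear estimator $\hat x := \Exp{xy^\intercal}\Exp{yy^\intercal}^{-1}y$ and verify the orthogonality relation $\Exp{(x-\hat x)y^\intercal}=0$ by direct expansion, using $\Exp{\hat x y^\intercal}=\Exp{xy^\intercal}\Exp{yy^\intercal}^{-1}\Exp{yy^\intercal}=\Exp{xy^\intercal}$. The crucial step is then to observe that the pair $(x-\hat x,\, y)$ is jointly Gaussian, since $x-\hat x$ is an affine function of the jointly Gaussian vector $\VEC(x,y)$. By the characterization just proved, the vanishing cross-covariance upgrades ``uncorrelated'' to ``independent,'' giving $x-\hat x \perp y$. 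As $x-\hat x$ is zero-mean, this yields $\Exp{x-\hat x \mid y}=\Exp{x-\hat x}=0$, and because $\hat x$ is a measurable function of $y$, we conclude $\Exp{x\mid y}=\hat x$. Since the conditional mean is the minimum-mean-square-error estimator among all (nonlinear) functions of $y$, while $\hat x$ is the minimizer among linear ones, this identity is exactly the statement that the best nonlinear estimator coincides with the best linear one.

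For the error-covariance formula, I would proceed by a short computation that again exploits orthogonality. Writing $e:=x-\Exp{x\mid y}=x-\hat x$ and using $\Exp{e\hat x^\intercal}=0$ (because $\hat x$ is linear in $y$ and $e\perp y$), I would reduce $\Exp{ee^\intercal}=\Exp{ex^\intercal}=\Exp{xx^\intercal}-\Exp{\hat x x^\intercal}$ and substitute the definition of $\hat x$ to obtain $\Exp{xx^\intercal}-\Exp{xy^\intercal}\Exp{yy^\intercal}^{-1}\Exp{yx^\intercal}$.

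The main obstacle is conceptual rather than computational: the whole argument hinges on the implication that, for a jointly Gaussian pair, being uncorrelated forces genuine independence. This is what allows the linear residual $x-\hat x$—orthogonal only to linear functionals of $y$ by construction—to be orthogonal to \emph{every} measurable function of $y$, and hence to equal the conditional mean. I would therefore present the joint Gaussianity of $(x-\hat x, y)$ explicitly, as it is the linchpin making the linear and nonlinear estimators agree. (I would also note that the middle factor in the displayed error-covariance identity should read $\Exp{yy^\intercal}^{-1}$ rather than $\Exp{yy^\intercal}$.)
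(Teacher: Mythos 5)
Your proof is correct, and there is in fact nothing in the paper to compare it against: the paper lists this result among its ``standard lemmas'' and provides no proof at all, invoking it as textbook material (the Gaussian projection theorem). Your write-up supplies the canonical argument, and it is organized around exactly the right linchpin: uncorrelatedness plus joint Gaussianity upgrades to independence, which is the only reason the residual $x-\hat x$ --- orthogonal by construction merely to linear functionals of $y$ --- becomes orthogonal to every measurable function of $y$, so that the best nonlinear estimator collapses to the best linear one. Each step checks out: the characteristic-function factorization for the converse of the independence claim, the joint Gaussianity of $(x-\hat x,\,y)$ as an affine image of $\VEC(x,y)$, the deduction $\Exp{x-\hat x\mid y}=0$, and the error-covariance reduction via $\Exp{e\hat x^\intercal}=\mathbf 0$ to $\Exp{ee^\intercal}=\Exp{xx^\intercal}-\Exp{xy^\intercal}\Exp{yy^\intercal}^{-1}\Exp{yx^\intercal}$. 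You were also right to flag the paper's typo: the displayed identity in the lemma is indeed missing the inverse on the middle factor.

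Two points are worth making explicit rather than implicit. First, the lemma says ``Gaussian random variables,'' but your entire argument --- and the ``if and only if'' claim itself --- requires $\VEC(x,y)$ to be \emph{jointly} Gaussian; marginal Gaussianity is not enough, as the standard example $y=\epsilon x$ with $\epsilon$ an independent symmetric random sign shows: both marginals are Gaussian and $\Exp{xy^\intercal}=\mathbf 0$, yet $|x|=|y|$, so independence fails. Your proof tacitly (and correctly) adopts the joint-Gaussian reading, which is how the lemma is deployed in the paper, since all quantities there are affine images of the jointly Gaussian primitives $\{\mathbf x_1,\mathbf w_{1:T},\mathbf v_{1:T}\}$. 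Second, the invertibility of $\Exp{yy^\intercal}$ is assumed without comment; in degenerate cases one would replace the inverse by a pseudo-inverse, which matters in principle here because the paper later applies the lemma to gauge-transformed quantities that satisfy exact linear dependencies (Lemma~4 of the paper), and indeed the paper works around precisely such a degeneracy in its auxiliary-update proof by dropping one coordinate. Neither point is a gap in your reasoning --- they are reading clarifications --- and your proof stands as a complete substitute for the proof the paper omits.
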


\begin{Lemma}\label{lemma:Summation}
Let $x$, $y_1$ and $y_2$ be Gaussian random variables. Let also $y_1$ and $y_2$ be independent and have  zero mean. Then,
\begin{equation}
\Exp{x \mid y_1,y_2}=\Exp{x \mid y_1} + \Exp{x \mid y_2}.
\end{equation}
\end{Lemma}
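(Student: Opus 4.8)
The plan is to collapse the two-observation conditioning into a single application of Lemma~\ref{lemma:Gaussian_orth} by stacking the observations, and then to exploit the independence of $y_1$ and $y_2$ to split the resulting linear estimator additively. Concretely, I would set $y := \VEC(y_1, y_2)$, so that $\Exp{x \mid y_1, y_2} = \Exp{x \mid y}$, and observe that $x$ and $y$ are jointly Gaussian with zero mean (working, consistent with Lemma~\ref{lemma:Gaussian_orth}, in the centered setting). Lemma~\ref{lemma:Gaussian_orth} then supplies the linear form $\Exp{x \mid y} = \Exp{x y^\intercal}\,\Exp{y y^\intercal}^{-1} y$, and the entire task reduces to understanding the structure of the two factors $\Exp{x y^\intercal}$ and $\Exp{y y^\intercal}^{-1}$.

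First I would use the hypotheses on $y_1$ and $y_2$. Since they are independent and zero mean, their cross-covariance vanishes, $\Exp{y_1 y_2^\intercal} = \Exp{y_1}\Exp{y_2}^\intercal = \mathbf 0$, which is exactly the final equivalence stated in Lemma~\ref{lemma:Gaussian_orth}. Consequently the joint covariance is block diagonal,
\begin{equation}
\Exp{y y^\intercal} = \begin{bmatrix} \Exp{y_1 y_1^\intercal} & \mathbf 0 \\ \mathbf 0 & \Exp{y_2 y_2^\intercal} \end{bmatrix},
\end{equation}
and hence so is its inverse, with diagonal blocks $\Exp{y_1 y_1^\intercal}^{-1}$ and $\Exp{y_2 y_2^\intercal}^{-1}$.

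With the cross-covariance partitioned conformably as $\Exp{x y^\intercal} = [\,\Exp{x y_1^\intercal},\ \Exp{x y_2^\intercal}\,]$, the block-diagonal inverse decouples the matrix product into two separate contributions,
\begin{equation}
\Exp{x \mid y} = \Exp{x y_1^\intercal}\,\Exp{y_1 y_1^\intercal}^{-1} y_1 + \Exp{x y_2^\intercal}\,\Exp{y_2 y_2^\intercal}^{-1} y_2.
\end{equation}
Applying Lemma~\ref{lemma:Gaussian_orth} once more to each term — identifying $\Exp{x y_k^\intercal}\,\Exp{y_k y_k^\intercal}^{-1} y_k = \Exp{x \mid y_k}$ for $k \in \{1,2\}$ — yields the claimed additive identity.

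The only genuine obstacle is the vanishing of the cross-covariance block; everything downstream is linear algebra that hinges on $\Exp{y y^\intercal}$ being block diagonal. I would therefore emphasize that this is precisely the step where both the independence and the zero-mean assumptions on $y_1, y_2$ are consumed: dropping either would reintroduce an off-diagonal coupling block in $\Exp{y y^\intercal}^{-1}$, mixing $y_1$ and $y_2$ in each projection and destroying the clean separation into $\Exp{x\mid y_1} + \Exp{x\mid y_2}$.
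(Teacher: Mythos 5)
Your proof is correct and is essentially the paper's own argument, fully worked out: the paper disposes of the lemma in one line by invoking $\Exp{y_1 y_2^\intercal}=\mathbf 0$, and your stacked-observation computation --- block-diagonal $\Exp{y y^\intercal}$, hence a block-diagonal inverse, hence the additive split of the linear projection from Lemma~\ref{lemma:Gaussian_orth} --- is precisely the reasoning that one-liner compresses. Your explicit remark that you work in the centered setting (so that applying Lemma~\ref{lemma:Gaussian_orth} to $x$ is legitimate) is a welcome precision the paper leaves implicit.
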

\begin{proof}
The proof follows from the fact that $\Exp{y_1 y_2^\intercal}=0$
\end{proof}
At any  time $t \in \mathbb{N}_T$, denote by $\mathcal{H}^g_t:=\{\mathbf y_{1:t}, \mathbf u_{1:t}\}$  and $\mathcal{H}_t:=\{\mathbf y_{1:t}, \mathbf u_{1:t-1}\}$  the  history sets with and without joint action~$\mathbf u_t$,  respectively.
\begin{Lemma}\label{lemma:independent_g}
  The conditional expectation of the joint state at time~$t \in \mathbb{N}_T$ given $\mathcal{H}^{\mathbf g}$ does not depend on the strategy $\mathbf g$, i.e.
$\Exp{\mathbf x_{t} \mid \mathcal{H}^g_t}=\Exp{\mathbf x_{t} \mid \mathcal{H}_t}$.
\end{Lemma}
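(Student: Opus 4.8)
The plan is to show that the sole difference between the two conditioning sets, namely the current joint action $\mathbf{u}_t$, carries no information beyond what is already available in $\mathcal{H}_t$. The entire argument is a measurability statement that rests on the IDSS information structure, which forces every action to be a deterministic function of the observation record; in particular, Gaussianity plays no role here.

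First I would invoke the information structure explicitly: under IDSS we have $u^i_t = g^i_t(y^i_{1:t}, \bar y_{1:t})$ for each $i \in \mathbb{N}_n$. Since the deep observation $\bar y_s = \frac{1}{n}\sum_{j=1}^n \alpha_j y^j_s$ is itself a deterministic function of $\mathbf{y}_s = \VEC(y^1_s,\ldots,y^n_s)$, the deep observation history $\bar y_{1:t}$ is measurable with respect to $\sigma(\mathbf{y}_{1:t})$. Consequently each $u^i_t$ — and hence the stacked vector $\mathbf{u}_t = \VEC(u^1_t,\ldots,u^n_t)$ — is a measurable function of $\mathbf{y}_{1:t}$ alone, for \emph{every} admissible strategy $\mathbf{g}$.

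The second step is the collapse of the $\sigma$-algebras. Because $\mathbf{y}_{1:t}$ is already contained in $\mathcal{H}_t = \{\mathbf{y}_{1:t}, \mathbf{u}_{1:t-1}\}$, adjoining the $\sigma(\mathbf{y}_{1:t})$-measurable variable $\mathbf{u}_t$ cannot enlarge the generated $\sigma$-algebra, i.e. $\sigma(\mathcal{H}^g_t) = \sigma(\mathcal{H}_t \cup \{\mathbf{u}_t\}) = \sigma(\mathcal{H}_t)$. Appealing to the standard redundancy property of conditional expectation — if $Z$ is $\sigma(\mathcal{G})$-measurable then $\Exp{X \mid \mathcal{G}, Z} = \Exp{X \mid \mathcal{G}}$ — I obtain $\Exp{\mathbf{x}_t \mid \mathcal{H}^g_t} = \Exp{\mathbf{x}_t \mid \mathcal{H}_t}$, which is exactly the claim, and the equality holds uniformly over $\mathbf{g}$.

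There is no genuine analytical obstacle; the only point requiring care is to verify that the IDSS restriction is precisely what makes $\mathbf{u}_t$ redundant. It is essential that each agent's action depends solely on quantities recoverable from the joint observation record $\mathbf{y}_{1:t}$, so that no exogenous randomization and no dependence on unobserved private states of other agents is introduced. This is the feature that prevents the current action from supplying fresh information, and it is the same fact that will subsequently permit the estimation recursion to be executed without reference to the control law.
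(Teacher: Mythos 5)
Your proof is correct, and it reaches the lemma by a genuinely different technical route than the paper. The paper argues through conditional densities: by Bayes' rule, $\Prob{\mathbf x_{t} \mid \mathcal{H}^g_t}$ is written as a ratio in which the factor $\Prob{\mathbf u_t \mid \mathcal{H}_t}=\ID{\mathbf u_t=\mathbf g_t(\mathcal{H}_t)}$ appears in both numerator and denominator and cancels, leaving $\Prob{\mathbf x_{t} \mid \mathcal{H}_t}$. You avoid densities altogether: you note that under IDSS each $u^i_t=g^i_t(y^i_{1:t},\bar y_{1:t})$ with $\bar y_s$ a deterministic linear function of $\mathbf y_s$, so $\mathbf u_t$ is $\sigma(\mathbf y_{1:t})$-measurable, hence $\sigma(\mathcal{H}^g_t)=\sigma(\mathcal{H}_t)$ and the two conditional expectations coincide by the redundancy property of conditional expectation. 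Both arguments rest on the identical key fact --- the current action is a deterministic function of data already contained in $\mathcal{H}_t$ --- but your measure-theoretic version is arguably the cleaner one: the paper's cancellation formally divides by what is really a Dirac mass (the conditional law of $\mathbf u_t$ given $\mathcal{H}_t$ is degenerate, so writing it as a density inside Bayes' rule is heuristic), whereas $\sigma$-algebra equality needs no such regularity. What the paper's route buys in exchange is extensibility: if agents were allowed to randomize (with randomization independent of the state given the history), $\Prob{\mathbf u_t \mid \mathcal{H}_t}$ becomes a genuine kernel that still cancels in the ratio, while your $\sigma$-algebra collapse would fail, since adjoining $\mathbf u_t$ would then strictly enlarge $\sigma(\mathcal{H}_t)$ unless the randomization variables were also added to the conditioning. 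Since the paper restricts attention to deterministic strategies $g^i_t:\mathbb{R}^{2td_y}\rightarrow\mathbb{R}^{d_u}$, your argument fully covers the lemma as stated, including the assertion that the equality holds uniformly over $\mathbf g$.
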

\emph{Proof.}
The proof follows form the fact that the following equality holds irrespective  of strategy $\mathbf g$, i.e.
$
\Prob{\mathbf x_{t} \mid \mathcal{H}^g_t}=\Prob{\mathbf x_{t} \mid \mathcal{H}_t},
$
where from  Bayes' rule,
\begin{align}
&\Prob{\mathbf x_{t} \mid \mathcal{H}^g_t}=\frac{\Prob{\mathbf x_{t} , \mathbf u_t\mid \mathcal{H}_t}}{\int_{\tilde{\mathbf x}_t} \Prob{\mathbf{\tilde{x}}_{t}, \mathbf u_t\mid \mathcal{H}_t} d(\tilde{\mathbf x}_t)}\\
&=\frac{\Prob{\mathbf u_t \mid \mathcal{H}_t} \Prob{\mathbf x_{t} \mid \mathcal{H}_t}}{\int_{\tilde{\mathbf x}_t} \Prob{\mathbf u_t \mid \mathcal{H}_t} \Prob{\mathbf{\tilde{x}}_{t}\mid \mathcal{H}_t} d(\tilde{\mathbf x}_t)}\\
&=\frac{\ID{\mathbf u_t=\mathbf g_t(\mathcal{H}_t)} \Prob{\mathbf x_{t} \mid \mathcal{H}_t}}{\ID{\mathbf u_t=\mathbf g_t(\mathcal{H}_t)} \int_{\tilde{\mathbf x}_t} \Prob{\mathbf{\tilde{x}}_{t}, \mathbf u_t\mid \mathcal{H}_t} d(\tilde{\mathbf x}_t)}= \Prob{\mathbf x_{t} \mid \mathcal{H}_t}.
\end{align}
 We now take 5 steps to derive a low-dimensional solution.
\subsection{Step 1: Gauge transformation} 
Define the following auxiliary variables using the gauge transformation presented in~\cite{Jalal2019risk} such that for every $i \in \mathbb{N}_n$:
\begin{align}\label{eq:gauge}
\Delta x^i_t&:=x^i_t - \alpha_i \bar x_t, \hspace{.1cm} \Delta u^i_t:=u^i_t - \alpha_i \bar u_t, \hspace{.1cm} \Delta w^i_t:=w^i_t - \alpha_i \bar w_t, \nonumber \\
\Delta y^i_t&:=y^i_t - \alpha_i \bar y_t, \hspace{.1cm} \Delta v^i_t:=v^i_t - \alpha_i \bar v_t.
\end{align}

\begin{Lemma}[Linear dependence between auxiliary variables]\label{lemma:LD_aux}
The gauge transformation~\eqref{eq:gauge} introduces the following relations:  $\sum_{i=1}^n \alpha_i \Delta x^i_t=\mathbf{0}, \sum_{i=1}^n \alpha_i \Delta u^i_t=\mathbf{0}, \sum_{i=1}^n \alpha_i \Delta w^i_t=\mathbf{0}
$, $\sum_{i=1}^n \alpha_i \Delta y^i_t=\mathbf{0}$ and $\sum_{i=1}^n \alpha_i \Delta v^i_t=\mathbf{0}$.
\end{Lemma}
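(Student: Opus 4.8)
The plan is to prove each of the five identities by a single direct computation, since they all share the identical algebraic structure; I will carry out the argument for $\Delta x^i_t$ in detail and then observe that the remaining four cases ($\Delta u^i_t$, $\Delta w^i_t$, $\Delta y^i_t$, $\Delta v^i_t$) follow verbatim by replacing $x$ with $u$, $w$, $y$, $v$, respectively. The two ingredients I will exploit are the definition of the deep state in~\eqref{eq:deep_state} and the normalization assumption $\frac{1}{n}\sum_{i=1}^n \alpha_i^2 = 1$.

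First I would substitute the gauge definition $\Delta x^i_t := x^i_t - \alpha_i \bar x_t$ from~\eqref{eq:gauge} into the weighted sum and split it into two terms:
\begin{equation}
\sum_{i=1}^n \alpha_i \Delta x^i_t = \sum_{i=1}^n \alpha_i x^i_t - \left( \sum_{i=1}^n \alpha_i^2 \right) \bar x_t.
\end{equation}
The next step is to rewrite each term using the two standing facts. From the definition $\bar x_t = \frac{1}{n}\sum_{i=1}^n \alpha_i x^i_t$ in~\eqref{eq:deep_state}, the first term equals $n \bar x_t$; from the normalization $\frac{1}{n}\sum_{i=1}^n \alpha_i^2 = 1$, the coefficient in the second term equals $\sum_{i=1}^n \alpha_i^2 = n$. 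Substituting both yields
\begin{equation}
\sum_{i=1}^n \alpha_i \Delta x^i_t = n \bar x_t - n \bar x_t = \mathbf{0},
\end{equation}
which establishes the claim for the state.

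Finally, I would note that the deep action $\bar u_t$, deep noise $\bar w_t$, deep observation $\bar y_t$, and deep measurement noise $\bar v_t$ are defined in~\eqref{eq:deep_state} and~\eqref{eq:noisy_deep_state} with exactly the same weighted-average form, and the corresponding gauge variables in~\eqref{eq:gauge} are defined identically. Hence the same two-line cancellation applies, giving $\sum_{i=1}^n \alpha_i \Delta u^i_t = \sum_{i=1}^n \alpha_i \Delta w^i_t = \sum_{i=1}^n \alpha_i \Delta y^i_t = \sum_{i=1}^n \alpha_i \Delta v^i_t = \mathbf{0}$. I do not anticipate any genuine obstacle here: the result is a purely algebraic identity, and the only point requiring care is the correct use of the normalization $\sum_{i=1}^n \alpha_i^2 = n$, which is precisely what makes the subtracted term reproduce the aggregate and force the cancellation. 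The conceptual content to flag is that this lemma is what guarantees the gauge-transformed (auxiliary) variables live in the subspace orthogonal to the aggregation direction, which is the structural fact the later orthogonal decomposition will rely on.
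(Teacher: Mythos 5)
Your proof is correct and follows essentially the same route as the paper, whose proof is simply the one-line remark that the result ``directly follows'' from~\eqref{eq:deep_state},~\eqref{eq:noisy_deep_state} and~\eqref{eq:gauge}; you have merely spelled out the cancellation explicitly, correctly identifying the normalization $\frac{1}{n}\sum_{i=1}^n \alpha_i^2 = 1$ as the ingredient that makes $\bigl(\sum_{i=1}^n \alpha_i^2\bigr)\bar x_t = n\bar x_t$ match the first term.
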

\begin{proof}
The proof directly follows from~\eqref{eq:deep_state},~\eqref{eq:noisy_deep_state} and~\eqref{eq:gauge}.
\end{proof}
From Lemma~\ref{lemma:LD_aux} and equations~\eqref{eq:deep_state} and~\eqref{eq:dynamics}, one arrives at:
\begin{align}\label{eq:dynamics_deep}
\bar x_{t+1}&=(A_t+\bar A_t)\bar x_t + (B_t+\bar B_t) \bar u_t + (E_t + \bar E_t) \bar w_t, \nonumber \\
\bar y_{t}&=(C_t+\bar C_t)\bar x_t + (S_t + \bar S_t) \bar v_t,
\end{align}
and for every $i \in \mathbb{N}_n$:
\begin{align}
\Delta x^i_{t+1}&=A_t \Delta x^i_t + B_t \Delta u^i_t + E_t \Delta w^i_t, \nonumber \\
\Delta y^i_{t}&=C_t \Delta x^i_t + S_t \Delta v^i_t.
\end{align}

\begin{Lemma}[Orthogonal relation in cost function]\label{lemma:cost}
Form the gauge transformation~\eqref{eq:gauge}, the per-step cost function~\eqref{eq:per_step_cost} can be expressed in terms of auxiliary variables and aggregate (deep) variables as follows:
\begin{multline}
\frac{1}{n} \sum_{i=1}^n \big( (x^i_t)^\intercal Q_t x^i_t +  (u^i_t)^\intercal R_t u^i_t \big)= \bar x_t^\intercal Q_t \bar x_t + \bar u_t^\intercal  R_t \bar u_t\\
+ \frac{1}{n} \sum_{i=1}^n \big((\Delta x^i_t)^\intercal Q_t \Delta x^i_t
 +  (\Delta u^i_t)^\intercal R_t \Delta u^i_t\big).
\end{multline}
\end{Lemma}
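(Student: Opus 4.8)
The plan is to prove the identity termwise by substituting the gauge decomposition and exploiting the two structural facts already available to us: the linear-dependence relations of Lemma~\ref{lemma:LD_aux} and the normalization $\frac{1}{n}\sum_{i=1}^n \alpha_i^2 = 1$. I would treat the state and action contributions separately, since they are formally identical, and establish
\begin{equation}
\frac{1}{n}\sum_{i=1}^n (x^i_t)^\intercal Q_t x^i_t = \bar x_t^\intercal Q_t \bar x_t + \frac{1}{n}\sum_{i=1}^n (\Delta x^i_t)^\intercal Q_t \Delta x^i_t,
\end{equation}
together with the analogous statement for $u^i_t$ and $R_t$; adding the two then yields the claim directly.

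First I would write $x^i_t = \Delta x^i_t + \alpha_i \bar x_t$, which is merely a rearrangement of the definition in~\eqref{eq:gauge}, and expand the quadratic form:
\begin{equation}
(x^i_t)^\intercal Q_t x^i_t = (\Delta x^i_t)^\intercal Q_t \Delta x^i_t + \alpha_i (\Delta x^i_t)^\intercal Q_t \bar x_t + \alpha_i \bar x_t^\intercal Q_t \Delta x^i_t + \alpha_i^2 \bar x_t^\intercal Q_t \bar x_t.
\end{equation}
Averaging over $i$, the two cross terms become $\left(\frac{1}{n}\sum_{i=1}^n \alpha_i \Delta x^i_t\right)^\intercal Q_t \bar x_t$ and its companion $\bar x_t^\intercal Q_t \left(\frac{1}{n}\sum_{i=1}^n \alpha_i \Delta x^i_t\right)$, each of which vanishes because $\sum_{i=1}^n \alpha_i \Delta x^i_t = \mathbf 0$ by Lemma~\ref{lemma:LD_aux}. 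The residual quadratic term averages to $\left(\frac{1}{n}\sum_{i=1}^n \alpha_i^2\right)\bar x_t^\intercal Q_t \bar x_t = \bar x_t^\intercal Q_t \bar x_t$ by the normalization convention, which gives the desired state identity.

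This computation is entirely mechanical, so there is no real obstacle; the only point requiring care is to recognize that the cancellation of the cross terms is a consequence of the \emph{orthogonality} built into the gauge transformation (Lemma~\ref{lemma:LD_aux}) rather than of any symmetry of the weight matrices, so each cross term is annihilated on its own and the argument goes through for general $Q_t$ and $R_t$. It is likewise the normalization $\frac{1}{n}\sum_{i=1}^n \alpha_i^2 = 1$ that makes the coefficient of the deep-state term collapse to unity. Repeating the identical argument for the action term with $R_t$ in place of $Q_t$ and combining the two completes the proof.
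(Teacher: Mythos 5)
Your proposal is correct and follows essentially the same route as the paper's own proof: substitute the gauge decomposition $x^i_t = \Delta x^i_t + \alpha_i \bar x_t$ (and its action counterpart), kill the cross terms via $\sum_{i=1}^n \alpha_i \Delta x^i_t = \mathbf 0$ from Lemma~\ref{lemma:LD_aux}, and use the normalization $\frac{1}{n}\sum_{i=1}^n \alpha_i^2 = 1$ to collapse the deep-state coefficient. You merely spell out the expansion and the role of the normalization, which the paper's terser proof leaves implicit.
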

\begin{proof}
From~\eqref{eq:deep_state} and~\eqref{eq:gauge},  $x^i_t= \Delta x^i_t + \alpha_i \bar x_t$ and  $u^i_t= \Delta u^i_t + \alpha_i \bar u_t$, $i \in \mathbb{N}_n$, can be replaced by the auxiliary and deep variables in the cost function. The proof follows from the fact that $\sum_{i=1}^n \alpha_i \Delta x^i_t Q_t \bar x_t=0$ and $\sum_{i=1}^n \alpha_i \Delta u^i_t R_t \bar u_t=0$.
\end{proof}
\begin{Lemma}[Relations between primitive random  variables]\label{lemma:Relations_primitive_rv}
The following holds for every finite $n \in \mathbb{N}$, $t\in \mathbb{N}$, and $i \neq j \in \mathbb{N}_n$:
\begin{enumerate}
\item $\Delta x^i_1 \not\perp \Delta x^j_1$ and
$\Exp{\Delta x^i_1 (\Delta x^j_1)^\intercal}= -\frac{\alpha_i\alpha_j}{n} \Sigma^x \neq \mathbf 0$,

\item $\Delta w^i_t \not\perp \Delta w^j_t$ and
$\Exp{\Delta w^i_t (\Delta w^j_t)^\intercal}= -\frac{\alpha_i\alpha_j}{n} \Sigma_t^w \neq \mathbf 0$,
\item $\Delta v^i_t \not\perp \Delta v^j_t$ and
$\Exp{\Delta v^i_t (\Delta v^j_t)^\intercal}= -\frac{\alpha_i\alpha_j}{n} \Sigma_t^v \neq \mathbf 0$,

\item $x^i_1 \not\perp \bar x_1$ and
$\Exp{x^i_1 (\bar x_1)^\intercal}= \frac{\alpha_i}{n} \Sigma^x \neq \mathbf 0$,

\item $w^i_t \not\perp \bar w_t$ and
$\Exp{w^i_t (\bar w_t)^\intercal}= \frac{\alpha_i}{n} \Sigma_t^w \neq \mathbf 0$,

\item $v^i_t \not\perp \bar v_t$ and
$\Exp{v^i_t (\bar v_t)^\intercal}= \frac{\alpha_i}{n} \Sigma_t^v \neq \mathbf 0$,

\item $\Exp{\Delta x^i_1 (\Delta x^i_1)^\intercal}=(1 -\frac{\alpha_i^2}{n}) \Sigma^x  \neq \mathbf 0$,

\item $\Exp{\Delta w^i_t (\Delta w^i_t)^\intercal}=(1 -\frac{\alpha_i^2}{n}) \Sigma_t^w  \neq \mathbf 0$,

\item  $\Exp{\Delta v^i_t (\Delta v^i_t)^\intercal}=(1 -\frac{\alpha_i^2}{n}) \Sigma_t^v  \neq \mathbf 0 $,

\item $\Delta x^i_1 \perp \bar x_1$ and $\Exp{\Delta x^i_1 (\bar x_1)^\intercal}= \mathbf 0$,

\item $\Delta w^i_t \perp \bar w_t$ and $\Exp{\Delta w^i_t (\bar w_t)^\intercal}= \mathbf 0$,

\item $\Delta v^i_t \perp \bar v_t$ and $\Exp{\Delta v^i_t (\bar v_t)^\intercal}= \mathbf 0$.
\end{enumerate}
The non-orthogonal relations (1)--(6) simplify to the orthogonal ones, as $n \rightarrow \infty$.
\end{Lemma}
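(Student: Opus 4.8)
The plan is to observe that every one of the twelve assertions is a second-moment computation between zero-mean jointly Gaussian vectors, so that the independence statements ($\perp$ or $\not\perp$) follow automatically from the covariance values via the orthogonality--independence equivalence in Lemma~\ref{lemma:Gaussian_orth}. After centering the initial states (replacing $x^i_1$ by $x^i_1 - \mu^x$, which alters no covariance), all primitive variables $x^i_1$, $w^i_t$, $v^i_t$ have zero mean, are mutually independent across agents and time, and satisfy $\Exp{x^i_1 (x^j_1)^\intercal} = \delta_{ij}\Sigma^x$ and the analogous identities for $w$ and $v$. The only structural facts I would use repeatedly are this independence, the common covariance, and the normalization $\frac{1}{n}\sum_{i=1}^n \alpha_i^2 = 1$, i.e.\ $\sum_{i=1}^n \alpha_i^2 = n$.

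First I would expand the gauge variable $\Delta x^i_1 = x^i_1 - \frac{\alpha_i}{n}\sum_{k} \alpha_k x^k_1$ and compute $\Exp{\Delta x^i_1 (\Delta x^j_1)^\intercal}$ by bilinearity. The expansion produces four terms; using $\Exp{x^a_1(x^b_1)^\intercal} = \delta_{ab}\Sigma^x$ each double sum collapses to a single surviving index, and the quadratic term contributes $\frac{\alpha_i\alpha_j}{n^2}\sum_k \alpha_k^2\,\Sigma^x = \frac{\alpha_i\alpha_j}{n}\Sigma^x$ precisely because $\sum_k \alpha_k^2 = n$. For $i\neq j$ the three nonzero contributions cancel down to $-\frac{\alpha_i\alpha_j}{n}\Sigma^x$ (claim~1), while for $i = j$ the diagonal term $\Sigma^x$ survives and yields $(1-\frac{\alpha_i^2}{n})\Sigma^x$ (claim~7). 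Claims (2),(3),(8),(9) are the verbatim same computation with $(w^i_t,\Sigma^w_t)$ and $(v^i_t,\Sigma^v_t)$ in place of $(x^i_1,\Sigma^x)$, now also invoking independence across time.

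Next I would treat the cross terms with the deep variables. Writing $\bar x_1 = \frac{1}{n}\sum_l \alpha_l x^l_1$ immediately gives $\Exp{x^i_1(\bar x_1)^\intercal} = \frac{\alpha_i}{n}\Sigma^x$ (claim~4) and $\Exp{\bar x_1(\bar x_1)^\intercal} = \frac{1}{n^2}\sum_k\alpha_k^2\,\Sigma^x = \frac{1}{n}\Sigma^x$. Subtracting, $\Exp{\Delta x^i_1(\bar x_1)^\intercal} = \frac{\alpha_i}{n}\Sigma^x - \alpha_i\cdot\frac{1}{n}\Sigma^x = \mathbf 0$, which is claim~10; claims (5),(6) and (11),(12) are identical for $w$ and $v$. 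At this point Lemma~\ref{lemma:Gaussian_orth} converts the nonzero covariances of (1)--(6) into the dependence assertions $\not\perp$ and the zero covariances of (10)--(12) into genuine independence. Finally, every cross-covariance in (1)--(6) carries an explicit factor $1/n$ with the remaining factors bounded, so it vanishes as $n\to\infty$, recovering orthogonality — and hence, by Lemma~\ref{lemma:Gaussian_orth}, asymptotic independence — as asserted in the closing sentence.

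There is no genuine obstacle here; the content is purely bookkeeping. The two points that require care are (i) the reduction to zero-mean variables, which is what legitimizes invoking Lemma~\ref{lemma:Gaussian_orth} to pass between orthogonality and independence, and (ii) keeping track of the normalization $\sum_k\alpha_k^2 = n$, since it is exactly this identity that makes the quadratic self-term of each expansion cancel cleanly and produces the tidy $1/n$ scaling on which the $n\to\infty$ limit depends.
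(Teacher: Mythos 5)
Your proof is correct and takes essentially the same route as the paper: the paper's own proof is a one-line appeal to~\eqref{eq:deep_state},~\eqref{eq:noisy_deep_state},~\eqref{eq:gauge} and the i.i.d.\ zero-mean property of the primitive random variables, and your bilinear expansion using $\Exp{x^a_1 (x^b_1)^\intercal}=\delta_{ab}\Sigma^x$ together with the normalization $\sum_k \alpha_k^2 = n$ is precisely the computation the paper is invoking. Your explicit centering of $x^i_1$ (so that the stated second moments coincide with covariances and Lemma~\ref{lemma:Gaussian_orth} legitimately converts zero cross-covariance into independence), and your flagging that the $n\to\infty$ orthogonality requires the influence factors to remain bounded, are careful touches the paper leaves implicit, but they do not change the argument.
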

\begin{proof}
The proof follows from~\eqref{eq:deep_state},~\eqref{eq:noisy_deep_state},~\eqref{eq:gauge} and the fact that driving noises and measurement noises are i.i.d. random vectors with zero mean.
\end{proof}

In the perfect sharing and deep state sharing, the certainty equivalence principle simplifies the analysis and results in  two standard decoupled Riccati equations~\cite{Jalal2020CCTA,Jalal2019risk}. In the imperfect observation case, however, the certainty equivalence principle does not hold. To see this, notice that although the dynamics and cost of the auxiliary subsystems are decoupled, their uncertainties are coupled (correlated) as shown in Lemma~\ref{lemma:Relations_primitive_rv}. This makes the analysis  more difficult.


\subsection{Step 2: Innovation processes}
Define the following global variables:
\begin{align}\label{eq:global_etimates}
z_{t|t}&:=\Exp{\bar x_t \mid \mathcal{H}_t},\quad &z_{t+1|t}&:=\Exp{\bar x_{t+1} \mid \mathcal{H}^g_t}, \nonumber \\
\xi_{t|t}&:=\bar x_t - z_{t|t},\quad  &\xi_{t+1|t}&:=\bar x_{t+1}- z_{t+1|t}, \nonumber \\
\bar \Sigma_{t|t}&:=\Exp{ \xi_{t|t} \xi_{t|t}^\intercal |\mathcal{H}_t}, \quad &\bar \Sigma_{t+1|t}&:=\Exp{ \xi_{t+1|t} \xi_{t+1|t}^\intercal |\mathcal{H}^g_t}.
\end{align}
From Lemma~\ref{lemma:independent_g} and equations~\eqref{eq:deep_state},~\eqref{eq:dynamics} and~\eqref{eq:global_etimates},  the dynamics of the deep-state estimate $z_{t+1|t}$ is given by
\begin{multline}{\label{eq:deep_state_est_dynamics}}
z_{t+1|t}=\Exp{\bar x_{t+1}\mid \mathcal{H}_t^g}=\Exp{(A_t+ \bar A_t) \bar x_t +(B_t +\bar B_t) \bar u_t \\+(E_t +\bar E_t)\bar w_t \mid \mathcal{H}^g_t}
=(A_t+ \bar A_t) z_{t|t} +(B_t +\bar B_t) \bar u_t.
\end{multline}
We now define local  variables for every $i$:
\begin{align}\label{eq:local_etimates}
&\Delta \hat x^i_{t|t}:=\Exp{\Delta x^i_t \mid \mathcal{H}_t}, \qquad   \Delta \hat x^i_{t+1|t}:=\Exp{\Delta x^i_{t+1} \mid \mathcal{H}^g_t}, \nonumber \\
&e^i_{t|t}:=\Delta  x^i_t - \Delta \hat x^i_{t|t},\qquad \hspace{.2cm} e^i_{t+1|t}:=\Delta  x^i_{t+1}- \Delta \hat x^i_{t+1|t},\nonumber \\
& \Sigma^{i}_{t|t}:=\Exp{e^i_{t|t} (e^i_{t|t})^\intercal |\mathcal{H}_t}, \quad \Sigma^i_{t+1|t}:=\Exp{ e^i_{t+1|t} (e^i_{t+1|t})^\intercal |\mathcal{H}^g_t}.
\end{align}

From Lemma~\ref{lemma:independent_g} and~\eqref{eq:deep_state},~\eqref{eq:dynamics},~\eqref{eq:gauge} and~\eqref{eq:local_etimates}, it results that 
\begin{align}
\Delta \hat x^i_{t+1|t}&=\Exp{A_t \Delta x^i_t+ B_t \Delta u^i_t+ E_t\Delta w^i_t\mid \mathcal{H}_t^g}\\
&=A_t \Delta \hat x^i_t+B_t \Delta u^i_t.
\end{align}

We define the following \emph{local innovation processes} for any $i \in \mathbb{N}_n$ and time $t \in \mathbb{N}_{T-1}$:
\begin{equation}
p^i_{t+1}:= y^i_{t+1}-\Exp{y^i_{t+1} \mid \mathcal{H}_t^g}.
\end{equation}
We decompose the above innovation processes according to the gauge transformation such that 
\begin{equation}\label{eq:innovation_pro_def}
\begin{cases}
\Delta p_{t+1}^i:= p^i_{t+1} -\alpha_i \bar p_{t+1}= \Delta y^i_{t+1}-\Exp{\Delta y^i_{t+1} \mid \mathcal{H}_t^g}, \\
\bar p_{t+1}:=\frac{1}{n} \sum_{i=1}^n \alpha_i p_{t+1}^i= \bar y_{t+1}-\Exp{\bar y_{t+1} \mid \mathcal{H}_t^g},\\
p^i_{t+1}=\Delta p^i_{t+1} + \alpha_i \bar p^i_{t+1}.
\end{cases}
\end{equation}
In the sequel, we refer to $\Delta p^i_t$ as \emph{the auxiliary innovation process} of agent $i$ and to $\bar p_t$ as  \emph{the  global innovation process}.

\begin{Remark}
\emph{Note that the following non-orthogonal relations hold for every finite $n \in \mathbb{N}$ and $i \neq j \in \mathbb{N}_n$:
\begin{enumerate}
\item $\Delta p^i_{t+1} \not\perp \Delta p^j_{t+1}$,
\item $p^i_{t+1} \not\perp \bar p_{t+1}$,
\item $p^i_{t+1} \not\perp p^j_{t+1}$.
\end{enumerate}
The above  non-orthogonal relations in (1)--(3) simplify to the orthogonal ones, as $n \rightarrow \infty$.}
\end{Remark}

\begin{Lemma}[Orthogonality in the transformed space]\label{lemma:Orthogonality_ innovation_processes}
Given $\mathcal{H}^g_t$, the following holds for every  $n \in \mathbb{N}$ and $i \in \mathbb{N}_n$:
\begin{enumerate}

\item $\Delta p^i_{t+1}  \perp \bar p_{t+1}$,

\item $\Delta p^i_{t+1}  \perp \bar x_{t+1}$,

\item $\Delta x^i_{t+1}  \perp \bar p_{t+1}$.

\end{enumerate}
\end{Lemma}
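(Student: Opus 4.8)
The plan is to show that, conditional on $\mathcal{H}^g_t$, every ``deep'' quantity and every ``auxiliary'' quantity live in two mutually orthogonal (hence, by Lemma~\ref{lemma:Gaussian_orth}, independent) Gaussian families, after which all three relations are immediate since each one pairs a member of the deep family with a member of the auxiliary family. Concretely, I would first record the innovation decompositions that follow from the estimate recursions established just above: writing $e^i_{t+1|t}=A_t e^i_{t|t}+E_t \Delta w^i_t$ and $\xi_{t+1|t}=(A_t+\bar A_t)\xi_{t|t}+(E_t+\bar E_t)\bar w_t$ and substituting the measurement equations gives $\Delta p^i_{t+1}=C_{t+1}e^i_{t+1|t}+S_{t+1}\Delta v^i_{t+1}$ and $\bar p_{t+1}=(C_{t+1}+\bar C_{t+1})\xi_{t+1|t}+(S_{t+1}+\bar S_{t+1})\bar v_{t+1}$. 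Since $z_{t+1|t}$ and $\Delta\hat x^i_{t+1|t}$ are constants given $\mathcal{H}^g_t$, the three claims all reduce to one cross-covariance statement: given $\mathcal{H}^g_t$, the auxiliary error $e^i_{t+1|t}$ and the fresh noise $\Delta v^i_{t+1}$ are orthogonal to the deep error $\xi_{t+1|t}$ and the fresh noise $\bar v_{t+1}$.

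Second, I would dispose of the fresh-noise cross terms directly: $\Delta v^i_{t+1}$ and $\bar v_{t+1}$ are independent of $\mathcal{H}^g_t$ and of all time-$\le t$ primitives, so they are orthogonal to both $e^i_{t+1|t}$ and $\xi_{t+1|t}$, while $\Delta v^i_{t+1}\perp\bar v_{t+1}$ by Lemma~\ref{lemma:Relations_primitive_rv}(12). The only substantive term is $\Exp{e^i_{t+1|t}\,\xi_{t+1|t}^\intercal\mid\mathcal{H}^g_t}$. Expanding through the two recursions, the three cross terms carrying $\Delta w^i_t$ or $\bar w_t$ vanish: $\Exp{\Delta w^i_t\,\bar w_t^\intercal}=\mathbf 0$ by Lemma~\ref{lemma:Relations_primitive_rv}(11), and the time-$t$ driving noises are independent of the time-$\le t$ errors. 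This leaves $A_t\,\Exp{e^i_{t|t}\,\xi_{t|t}^\intercal\mid\mathcal{H}^g_t}\,(A_t+\bar A_t)^\intercal$.

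The remaining work is to show $\Exp{e^i_{t|t}\,\xi_{t|t}^\intercal\mid\mathcal{H}^g_t}=\mathbf 0$, i.e. that the measurement update preserves the deep/auxiliary orthogonality. I would argue this structurally rather than by brute induction. Since the actions are deterministic functions of past measurements (equivalently, by Lemma~\ref{lemma:independent_g} they carry no information about the states beyond the measurements), $\sigma(\mathcal{H}^g_t)=\sigma(\mathbf y_{1:t})=\sigma(\bar y_{1:t})\vee\sigma(\{\Delta y^i_{1:t}\}_i)$, using the gauge identities $\bar y_t=\tfrac1n\sum_j\alpha_j y^j_t$ and $\Delta y^i_t=y^i_t-\alpha_i\bar y_t$. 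From the decoupled dynamics~\eqref{eq:dynamics_deep} and its auxiliary counterpart, every deep variable is a linear image of the deep primitives $\{\bar x_1,\bar w_{1:t},\bar v_{1:t}\}$ and every $i$-th auxiliary variable is a linear image of $\{\Delta x^i_1,\Delta w^i_{1:t},\Delta v^i_{1:t}\}$. These two primitive families are jointly Gaussian and orthogonal --- the equal-time cross terms vanish by Lemma~\ref{lemma:Relations_primitive_rv}(10)--(12) and the rest vanish by independence across time and noise type --- hence independent by Lemma~\ref{lemma:Gaussian_orth}. Because $\bar x_t$ is orthogonal to the auxiliary measurements, its $L^2$-projection onto $\sigma(\mathbf y_{1:t})$ equals its projection onto $\sigma(\bar y_{1:t})$ alone, so $\xi_{t|t}$ stays a deep variable; symmetrically $e^i_{t|t}$ stays an $i$-th auxiliary variable. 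Independence of the two families then gives $\Exp{e^i_{t|t}\,\xi_{t|t}^\intercal\mid\mathcal{H}^g_t}=\mathbf 0$, closing the computation and yielding all three relations.

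I expect the delicate point to be exactly this projection step: one must argue carefully that conditioning on the \emph{mixed} measurement $\sigma$-algebra $\sigma(\mathbf y_{1:t})$ --- which couples the two families through $y^i_t=\Delta y^i_t+\alpha_i\bar y_t$ and through the strategy-dependent actions --- nevertheless never correlates a deep residual with an auxiliary residual. This is precisely where the independence of the two primitive families (Lemma~\ref{lemma:Relations_primitive_rv}) and the redundancy of the actions (Lemma~\ref{lemma:independent_g}) are both indispensable.
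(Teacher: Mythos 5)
Your proof is correct and follows essentially the same route as the paper's: both arguments rest on the observation that the gauge transformation splits all randomness into the deep primitive family $\{\bar x_1,\bar w_{1:t},\bar v_{1:t+1}\}$ and the auxiliary primitive families $\{\Delta x^i_1,\Delta w^i_{1:t},\Delta v^i_{1:t+1}\}$, which are mutually independent by Lemma~\ref{lemma:Relations_primitive_rv} (the orthogonality items 10--12), so that every deep-side quantity is orthogonal to every auxiliary-side quantity given the history, with Lemma~\ref{lemma:independent_g} disposing of the strategy-dependent control coupling exactly as the paper does. The only difference is bookkeeping: the paper expands $\Delta y^i_{t+1}$ and $\bar y_{t+1}$ directly as trajectories in the primitives and computes the single conditional covariance $\Exp{\Delta p^i_{t+1} \bar p_{t+1}^\intercal \mid \mathcal{H}_t}=\mathbf 0$, whereas you route through the error recursions for $e^i_{t+1|t}$ and $\xi_{t+1|t}$ before invoking the same family split.
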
 
\begin{proof}
We first prove that 
$\{\Delta x^i_{t+1},\Delta y^i_{t+1} \mid \forall i \in \mathbb{N}_n\}$ and  $\{\bar x_{t+1},\bar y_{t+1}\}$ are two independent sets. In particular,  given the history set $\mathcal{H}_t$,  we  show that the randomness of $\Delta x^i_{t+1}$ and $\Delta y^i_{t+1}$ is completely characterized by  the set $\{\Delta x^i_1,\Delta w^i_{1:t}, \Delta v^i_{t+1}\}$ as follows:
\begin{equation}
\begin{cases}
\Delta x^i_{t+1}= (\prod_{k=1}^t A_k) \Delta x^i_1+ \sum_{k=1}^t \mathcal{A}(t-k) B_k \Delta u^i_k \\
\qquad \qquad +  \sum_{k=1}^t \mathcal{A}(t-k) E_k \Delta w^i_k,\\
\Delta y^i_{t+1}=C_{t+1}\Delta x^i_{t+1}+ S_{t+1} \Delta v^i_{t+1},
\end{cases}
\end{equation}
where $\mathcal{A}(0):=\mathbf{I}$ and $\mathcal{A}(t-k):=\prod_{k'=t-k}^{t-1} A_{k'}$. Similarly,   the randomness of $\bar x_{t+1}$ and $\bar y_{t+1}$ is completely characterized by  the set $\{\bar x_1,\bar w_{1:t}, \bar v_{t+1}\}$ such that
\begin{equation}
\begin{cases}
\bar x_{t+1}\hspace{-.1cm}=\hspace{-.1cm} (\prod_{k=1}^t (A_k+ \bar A_k)) \bar x_1\hspace{-.1cm}+\hspace{-.1cm} \sum_{k=1}^t \bar{\mathcal{A}}(t-k) (B_k+\bar B_k) \bar u_k \\
\qquad \qquad +  \sum_{k=1}^t \bar{\mathcal{A}}(t-k) (E_k+\bar E_k) \bar w_k,\\
\bar y_{t+1}=(C_{t+1}+\bar C_{t+1})\bar x_{t+1}+ (S_{t+1}+\bar S_{t+1}) \bar v_{t+1},
\end{cases}
\end{equation}
where $\bar{\mathcal{A}}(0):=\mathbf{I}$ and $\bar{\mathcal{A}}(t-k):=\prod_{k'=t-k}^{t-1} (A_{k'}+\bar A_{k'})$. Therefore,  $\{\Delta x^i_{t+1},\Delta y^i_{t+1} \mid \forall i \in \mathbb{N}_n\}$ and  $\{\bar x_{t+1},\bar y_{t+1}\}$ are mutually independent  because their corresponding random sets are independent according to  properties 5-7 in Lemma~\ref{lemma:Relations_primitive_rv}.
 
We now prove the first property of the lemma.  From~\eqref{eq:innovation_pro_def}, it follows that
\begin{align}
&\Exp{\Delta p^i_{t+1} \bar p_{t+1}^\intercal \mid \mathcal{H}_t}= \mathbb{E}[(\Delta y^i_{t+1} -\Exp{\Delta y^i_{t+1}\mid \mathcal{H}_t})
 (\bar y_{t+1} \\
& -\Exp{\bar y_{t+1}\hspace{-.1cm} \mid \hspace{-.1cm} \mathcal{H}_t})^\intercal \mid \mathcal{H}_t]
\hspace{-.1cm}=\hspace{-.1cm} \Exp{(\Delta y^i_{t+1} \hspace{-.1cm}-\hspace{-.1cm} \Exp{\Delta y^i_{t+1}\mid \mathcal{H}_t}) \bar y_{t+1}^\intercal \mid \mathcal{H}_t } \\
&-  \Exp{(\Delta y^i_{t+1} -\Exp{\Delta y^i_{t+1}\mid \mathcal{H}_t}) }\Exp{\bar y_{t+1}\mid \mathcal{H}_t})^\intercal
\substack{(a)\\=} \mathbf 0,
\end{align} \hspace{-.1cm}
where $(a)$ follows from the fact that $\Exp{\Delta y^i_{t+1}\bar y_{t+1}^\intercal\mid \mathcal{H}_t}=  \Exp{\Delta y^i_{t+1}\mid \mathcal{H}_t} \Exp{\bar y_{t+1}\mid \mathcal{H}_t}^\intercal$ due their mutual independence established above. The proof is  completed from Lemma~\ref{lemma:independent_g}. A Similar argument holds for properties 2 and~3. 
\end{proof}

\subsection{Step 3: Covariance matrices}

From~\eqref{eq:deep_state},~\eqref{eq:dynamics} and~\eqref{eq:deep_state_est_dynamics},  the estimation error $\xi_{t+1\mid t}$ and covariance matrix $\bar \Sigma_{t+1|t}$  evolve at time $t \in \mathbb{N}_T$ as follows:
\begin{equation}
\begin{cases}
\xi_{t+1\mid t}=(A_t+\bar A_t) \xi_{t\mid t}+ (E_t+\bar E_t) \bar w_t,\\
\bar \Sigma_{t+1|t}=(A_t+\bar A_t) \bar \Sigma_{t|t} (A_t+\bar A_t)^\intercal+ \COV((E_t+\bar E_t) \bar w_t).
\end{cases}
\end{equation}
Similarly, the dynamics of  local  estimation error $e^i_{t|t}$ and covariance matrix $\Sigma^i_{t|t}$ for every $i \in \mathbb{N}_n$ at time $t \in \mathbb{N}_T$  are
\begin{equation}\label{eq:sigma_yes_i}
\begin{cases}
e^i_{t+1\mid t}=A_t e^i_{t\mid t}+ E_t \Delta w^i_t,\\
 \Sigma^i_{t+1|t}=A_t \Sigma^i_{t|t} A_t^\intercal+ \COV(E_t \Delta w^i_t).
 \end{cases}
\end{equation}

Given any $i \in \mathbb{N}_n$, one can define \textit{index-invariant} covariance matrices  as follows:
\begin{equation}\label{eq:sigma_not_i}
\Sigma_{t+1 \mid t}:=(1 - \frac{\alpha_i^2}{n})^{-1} \Sigma^i_{t+1 \mid t},\quad  \Sigma_{t \mid t}:= (1 - \frac{\alpha_i^2}{n})^{-1} \Sigma^i_{t \mid t}
\end{equation}
where
\begin{equation}
 \Sigma_{t+1|t}=A_t \Sigma_{t|t} A_t^\intercal+ E_t \Sigma^w_t E_t^\intercal.
\end{equation}
Denote by $\Delta \mathbf P^{-i}_{t+1} $ and $\boldsymbol \alpha^{-i}$ the joint vector consisting of  $\Delta p^j_{t+1}$ and $\alpha_j$, $\forall j \neq i \in \mathbb{N}_n$, respectively.
\begin{Lemma}\label{lemma:cross_terms}
For any $ i \neq j  \in \mathbb{N}_n$, the following equalities hold at any time $t \in \mathbb{N}_T$:
\begin{enumerate}
\item $\Exp{e^i_{t+1|t} (e^i_{t+1|t})^\intercal}= (1 - \frac{\alpha_i^2}{n}) \Sigma_{t+1|t}$,

\item $\Exp{e^i_{t+1 \mid t} (e^j_{t+1 \mid t})^\intercal}=\frac{-\alpha_i \alpha_j}{n - \alpha_i^2} \Sigma^{i}_{t+1 \mid t}=\frac{-\alpha_i \alpha_j}{n} \Sigma_{t+1|t}$,

\item $\Exp{\Delta x^i_{t+1} (\Delta \mathbf{p}^{-i}_{t+1})^\intercal}=\frac{-\alpha_i}{n}(\Sigma_{t+1 |t} C_{t+1}^\intercal) \otimes (\boldsymbol \alpha^{-i})^\intercal$,

\item $\Exp{\Delta \mathbf{p}^{-i}_{t+1} (\Delta \mathbf{p}^{-i}_{t+1})^\intercal}=(\mathbf I_{(n-1) \times (n-1)}  -\frac{1}{n}  \boldsymbol \alpha^{-i} (\boldsymbol \alpha^{-i})^\intercal ) \\
 \otimes (C_{t+1} \Sigma_{t+1|t}C_{t+1}^\intercal
+ S_{t+1} \Sigma^v_{t+1}S_{t+1}^\intercal)$.
\end{enumerate}

\end{Lemma}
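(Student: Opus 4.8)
The plan is to reduce all four identities to a single structural claim about the joint prediction-error covariance and then verify that claim by induction on~$t$. Introduce the full influence vector $\boldsymbol\alpha:=\VEC(\alpha_1,\ldots,\alpha_n)$ and the matrix $\mathbf P:=\mathbf I_n-\frac1n\boldsymbol\alpha\boldsymbol\alpha^\intercal$; since $\frac1n\sum_{i=1}^n\alpha_i^2=1$, the matrix $\mathbf P$ is the orthogonal projection onto the complement of $\boldsymbol\alpha$ (so $\mathbf P\boldsymbol\alpha=\mathbf 0$ and $\mathbf P^2=\mathbf P$). Stacking $\mathbf e_{t+1|t}:=\VEC(e^1_{t+1|t},\ldots,e^n_{t+1|t})$, the central claim is $\COV(\mathbf e_{t+1|t})=\mathbf P\otimes\Sigma_{t+1|t}$ (and likewise $\COV(\mathbf e_{t|t})=\mathbf P\otimes\Sigma_{t|t}$). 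Reading off the diagonal block $P_{ii}=1-\frac{\alpha_i^2}{n}$ and the off-diagonal block $P_{ij}=-\frac{\alpha_i\alpha_j}{n}$ of $\mathbf P\otimes\Sigma_{t+1|t}$ gives exactly parts~(1) and~(2), while the second equality in~(2) is immediate from the definition $\Sigma^i_{t+1|t}=(1-\frac{\alpha_i^2}{n})\Sigma_{t+1|t}$ in~\eqref{eq:sigma_not_i}. (Since the system is linear-Gaussian, the conditional error covariance is deterministic, so the unconditional expectations in the lemma coincide with the conditional covariances.)

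The base case is supplied by Lemma~\ref{lemma:Relations_primitive_rv}: as $e^i_{1|0}=\Delta x^i_1$, parts~(1) and~(7) of that lemma give $\COV(\mathbf e_{1|0})=\mathbf P\otimes\Sigma^x=\mathbf P\otimes\Sigma_{1|0}$. For the inductive step I would push the filtered covariance through the prediction map $\mathbf e_{t+1|t}=(\mathbf I_n\otimes A_t)\mathbf e_{t|t}+(\mathbf I_n\otimes E_t)\Delta\mathbf w_t$. The filtered error is a function of $\{\Delta x^j_1,\Delta w^j_{1:t-1},\Delta v^j_{1:t}\mid j\in\mathbb N_n\}$ and is hence independent of the process noise $\Delta\mathbf w_t$, so the cross term vanishes and $\COV(\mathbf e_{t+1|t})=(\mathbf I_n\otimes A_t)(\mathbf P\otimes\Sigma_{t|t})(\mathbf I_n\otimes A_t)^\intercal+(\mathbf I_n\otimes E_t)\COV(\Delta\mathbf w_t)(\mathbf I_n\otimes E_t)^\intercal$. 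Parts~(2) and~(8) of Lemma~\ref{lemma:Relations_primitive_rv} give $\COV(\Delta\mathbf w_t)=\mathbf P\otimes\Sigma^w_t$, and the Kronecker mixed-product rule collapses the expression to $\mathbf P\otimes(A_t\Sigma_{t|t}A_t^\intercal+E_t\Sigma^w_tE_t^\intercal)=\mathbf P\otimes\Sigma_{t+1|t}$, establishing~(1)--(2) at the prediction level.

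With~(1)--(2) in hand, parts~(3) and~(4) are direct computations after substituting the innovation representation $\Delta p^j_{t+1}=C_{t+1}e^j_{t+1|t}+S_{t+1}\Delta v^j_{t+1}$, which follows from the measurement equation and $\Exp{\Delta y^j_{t+1}\mid\mathcal H^g_t}=C_{t+1}\Delta\hat x^j_{t+1|t}$. For~(3) I write $\Delta x^i_{t+1}=\Delta\hat x^i_{t+1|t}+e^i_{t+1|t}$; the $\mathcal H^g_t$-measurable term is orthogonal to $e^j_{t+1|t}$ and to $\Delta v^j_{t+1}$, and $e^i_{t+1|t}$ is uncorrelated with the time-$(t+1)$ measurement noise, leaving $\Exp{\Delta x^i_{t+1}(\Delta p^j_{t+1})^\intercal}=-\frac{\alpha_i\alpha_j}{n}\Sigma_{t+1|t}C_{t+1}^\intercal$; stacking over $j\neq i$ yields the Kronecker form. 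For~(4) the same orthogonalities reduce the $(j,k)$ block to $C_{t+1}\COV(e^j_{t+1|t},e^k_{t+1|t})C_{t+1}^\intercal+S_{t+1}\Exp{\Delta v^j_{t+1}(\Delta v^k_{t+1})^\intercal}S_{t+1}^\intercal$, and inserting~(1)--(2) together with parts~(3) and~(9) of Lemma~\ref{lemma:Relations_primitive_rv} reproduces exactly the block pattern of $(\mathbf I_{(n-1)\times(n-1)}-\frac1n\boldsymbol\alpha^{-i}(\boldsymbol\alpha^{-i})^\intercal)\otimes(C_{t+1}\Sigma_{t+1|t}C_{t+1}^\intercal+S_{t+1}\Sigma^v_{t+1}S_{t+1}^\intercal)$.

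The main obstacle is closing the induction by propagating $\mathbf P\otimes\Sigma$ through the measurement update from $t+1\mid t$ to $t+1\mid t+1$. The subtlety is that the stacked local innovation $\VEC(\Delta p^1_{t+1},\ldots,\Delta p^n_{t+1})$ is rank-deficient, since $\sum_{i=1}^n\alpha_i\Delta p^i_{t+1}=\mathbf 0$ is inherited from Lemma~\ref{lemma:LD_aux}; the naive Kalman gain is therefore ill-posed and one must update against the reduced innovation $\Delta\mathbf p^{-i}_{t+1}$, which is precisely why parts~(3)--(4) isolate those two cross-covariances. To verify that the update preserves the projection structure I would use that the filtered errors inherit the same linear constraint $\sum_{i=1}^n\alpha_i e^i_{t+1\mid t+1}=\mathbf 0$ (because $\sum_{i=1}^n\alpha_i\Delta x^i_{t+1}=\mathbf 0$ and conditional expectation is linear), so $\boldsymbol\alpha\otimes\mathbf I_{d_x}$ remains in the null space of the updated covariance; combined with the index-invariant shape of $\Sigma_{t+1\mid t+1}$ and the orthogonality $\Delta x^i_{t+1}\perp\bar p_{t+1}$ from Lemma~\ref{lemma:Orthogonality_ innovation_processes} (which removes the global innovation from the auxiliary update), this pins the updated covariance to $\mathbf P\otimes\Sigma_{t+1\mid t+1}$ and closes the induction. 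This is the step where the conditionally-i.i.d.\ (exchangeable) structure of the gauge-transformed variables does the real work.
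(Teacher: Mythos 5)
Your overall architecture is sound and, in fact, is essentially the fleshed-out version of the argument the paper compresses into a one-line citation of \eqref{eq:innovation_pro_def}--\eqref{eq:sigma_not_i} and Lemma~\ref{lemma:Relations_primitive_rv}: the reduction to the stacked claim $\COV(\mathbf e_{t+1|t})=\mathbf P\otimes\Sigma_{t+1|t}$ with $\mathbf P=\mathbf I_n-\frac{1}{n}\boldsymbol \alpha \boldsymbol \alpha^\intercal$, the base case, the prediction step via the mixed-product rule, and the derivation of parts (3)--(4) from the innovation representation $\Delta p^j_{t+1}=C_{t+1}e^j_{t+1|t}+S_{t+1}\Delta v^j_{t+1}$ all check out. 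The genuine gap is your closure of the induction at the measurement update. You assert that the linear constraint $\sum_{i=1}^n\alpha_i e^i_{t+1|t+1}=\mathbf 0$ (i.e., $\boldsymbol \alpha\otimes\mathbf I_{d_x}$ lying in the null space of the filtered covariance) together with the index-invariant diagonal blocks $(1-\frac{\alpha_i^2}{n})\Sigma_{t+1|t+1}$ ``pins'' the covariance to $\mathbf P\otimes\Sigma_{t+1|t+1}$. It does not: these conditions impose only $n$ block relations on the $n(n-1)/2$ symmetric off-diagonal blocks, so the covariance is underdetermined for every $n\geq 4$. Concretely, take scalar states, $n=4$, $\alpha_i\equiv 1$: the symmetric matrices with diagonal $\frac{3}{4}$, zero row sums, and $M_{12}=M_{34}=-\frac{1}{4}+\epsilon$, $M_{13}=M_{24}=-\frac{1}{4}-\epsilon$, $M_{14}=M_{23}=-\frac{1}{4}$ satisfy both of your constraints and remain positive semi-definite for all small $\epsilon$, yet only $\epsilon=0$ gives $\mathbf P$. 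The appeal to ``exchangeable structure'' cannot rescue this as stated, because with heterogeneous influence factors the gauge-transformed variables are not exchangeable, and you never supply the invariance that would substitute for the missing computation.

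The repair is available inside your own induction ordering, so the gap is one of execution rather than strategy. By the time you need the filtered covariance at $t+1$, parts (3)--(4) at $t+1$ are already established, so the derivation of the auxiliary update in Lemma~\ref{lemma:local} may legitimately be invoked (no circularity), giving the explicit error recursion $e^i_{t+1|t+1}=(\mathbf I-L_{t+1}C_{t+1})e^i_{t+1|t}-L_{t+1}S_{t+1}\Delta v^i_{t+1}$ with $L_{t+1}=\Sigma_{t+1|t}C_{t+1}^\intercal(C_{t+1}\Sigma_{t+1|t}C_{t+1}^\intercal+S_{t+1}\Sigma^v_{t+1}S_{t+1}^\intercal)^{-1}$. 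Stacking, using the independence of $\mathbf e_{t+1|t}$ and $\Delta \mathbf v_{t+1}$, $\COV(\Delta\mathbf v_{t+1})=\mathbf P\otimes\Sigma^v_{t+1}$ (Lemma~\ref{lemma:Relations_primitive_rv}, parts 3 and 9), and the mixed-product rule yields, with $L:=L_{t+1}$, $C:=C_{t+1}$, $S:=S_{t+1}$,
\begin{equation}
\COV(\mathbf e_{t+1|t+1})=\mathbf P\otimes\big((\mathbf I-LC)\Sigma_{t+1|t}(\mathbf I-LC)^\intercal+LS\Sigma^v_{t+1}S^\intercal L^\intercal\big),
\end{equation}
and the second Kronecker factor collapses by the standard Joseph-form identity to $(\mathbf I-LC)\Sigma_{t+1|t}=\Sigma_{t+1|t+1}$, which closes the induction rigorously and replaces the invalid pinning argument.
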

\begin{proof}
The proof follows from~\eqref{eq:innovation_pro_def}--\eqref{eq:sigma_not_i} and Lemma~\ref{lemma:Relations_primitive_rv}.
\end{proof}
\begin{Lemma}
For each agent $i \in \mathbb{N}_n$, the following relation holds between global and local innovation processes
\begin{align}
\COV(p^i_{t+1})= \COV(\Delta p^i_{t+1}) + \alpha_i^2 \COV(\bar p_{t+1}).
\end{align}
\end{Lemma}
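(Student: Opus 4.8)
The plan is to exploit the additive decomposition of the local innovation process, recorded as the third identity in~\eqref{eq:innovation_pro_def}, namely $p^i_{t+1}=\Delta p^i_{t+1}+\alpha_i \bar p_{t+1}$, which writes the raw innovation as the sum of its auxiliary part and a scalar multiple of the global part. The whole argument then rests on pushing the conditional orthogonality already established in Lemma~\ref{lemma:Orthogonality_ innovation_processes} through the covariance of this sum.

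First I would note that both $\Delta p^i_{t+1}$ and $\bar p_{t+1}$ are unconditionally zero-mean: each is the difference between a measurement and its conditional expectation given $\mathcal{H}^g_t$, so by the tower property $\Exp{\Delta p^i_{t+1}}=\Exp{\bar p_{t+1}}=\mathbf 0$, and hence $\Exp{p^i_{t+1}}=\mathbf 0$ as well. Consequently each covariance reduces to the corresponding second-moment matrix. Using that $\alpha_i$ is a scalar, so that $\COV(\alpha_i \bar p_{t+1})=\alpha_i^2\,\COV(\bar p_{t+1})$, I would expand
\[
\COV(p^i_{t+1})=\COV(\Delta p^i_{t+1})+\alpha_i^2\,\COV(\bar p_{t+1})+\alpha_i\Exp{\Delta p^i_{t+1}(\bar p_{t+1})^\intercal}+\alpha_i\Exp{\bar p_{t+1}(\Delta p^i_{t+1})^\intercal}.
\]
The two cross terms are transposes of one another, so it suffices to annihilate a single cross-moment $\Exp{\Delta p^i_{t+1}(\bar p_{t+1})^\intercal}$.

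The crux is therefore to show $\Exp{\Delta p^i_{t+1}(\bar p_{t+1})^\intercal}=\mathbf 0$. Property~(1) of Lemma~\ref{lemma:Orthogonality_ innovation_processes} supplies exactly the conditional orthogonality $\Exp{\Delta p^i_{t+1}(\bar p_{t+1})^\intercal\mid \mathcal{H}^g_t}=\mathbf 0$, and applying the law of total expectation removes the conditioning to yield the unconditional statement. Substituting this back collapses the expansion to the claimed identity, and multiplying the global term by $\alpha_i^2$ gives the stated weighting. I expect the only delicate point to be this passage from the conditional orthogonality of Lemma~\ref{lemma:Orthogonality_ innovation_processes} (phrased given $\mathcal{H}^g_t$) to the unconditional cross-moment entering $\COV$; since it is handled cleanly by the tower property, there is no genuine obstacle here beyond careful bookkeeping of means and scalars.
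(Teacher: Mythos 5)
Your proposal is correct and follows essentially the same route as the paper: both rest on the decomposition $p^i_{t+1}=\Delta p^i_{t+1}+\alpha_i \bar p_{t+1}$ from~\eqref{eq:innovation_pro_def} together with the orthogonality $\Delta p^i_{t+1}\perp \bar p_{t+1}$ of Lemma~\ref{lemma:Orthogonality_ innovation_processes} to kill the cross terms in the expanded covariance. Your only addition is to make explicit the tower-property passage from the conditional orthogonality (given $\mathcal{H}^g_t$) to the unconditional cross-moment, a bookkeeping step the paper leaves implicit.
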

\begin{proof}
The proof follows from orthogonality in Lemma~\ref{lemma:Orthogonality_ innovation_processes} where
$\COV(y^i_{t+1} - \Exp{y^i_{t+1} \mid \mathcal{H}_t^g})=\COV(\Delta y^i_{t+1} -\Exp{\Delta y^i_{t+1}\mid \mathcal{H}^g_t}) + \alpha_i^2 \COV(\bar y_{t+1} -\Exp{\bar y_{t+1}\mid \mathcal{H}^g_t})$.
\end{proof}

\begin{Proposition}(A matrix-inversion for  other agents' auxiliary variables)\label{lemma:inverse}
Let $\boldsymbol \alpha=\VEC(\alpha_1,\ldots,\alpha_n) \in \mathbb{R}^{n}$ be any $n$-dimensional vector where   $\alpha_i \neq 0$, $\forall i \in \mathbb{N}_n$, and  $\frac{1}{n} \boldsymbol \alpha^\intercal \boldsymbol \alpha=\frac{1}{n} \sum_{i=1}^n (\alpha_i)^2=1$. Then, for every $i \in \mathbb{N}_n$,  it follows that
\begin{multline}
\big(\mathbf I_{(n-1) \times (n-1)} -\frac{1}{n} (\boldsymbol \alpha^{-i})(\boldsymbol \alpha^{-i})^\intercal\big)^{-1}\\
=\big(\mathbf I_{(n-1) \times (n-1)} + (\alpha_i)^{-2} (\boldsymbol \alpha^{-i})(\boldsymbol \alpha^{-i})^\intercal\big).
\end{multline}
\end{Proposition}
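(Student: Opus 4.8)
The plan is to recognize the matrix on the left as a rank-one perturbation of the identity and to exploit the Sherman--Morrison structure, but since the right-hand side already supplies a candidate inverse, the cleanest route is direct verification by multiplication. First I would introduce the shorthand $\mathbf{a}:=\boldsymbol{\alpha}^{-i}\in\mathbb{R}^{n-1}$, so that the matrix in question is $\mathbf{M}:=\mathbf{I}_{(n-1)\times(n-1)}-\tfrac{1}{n}\mathbf{a}\mathbf{a}^\intercal$ and the claimed inverse is $\mathbf{I}_{(n-1)\times(n-1)}+\alpha_i^{-2}\mathbf{a}\mathbf{a}^\intercal$. Because both factors are of the form ``identity plus a multiple of $\mathbf{a}\mathbf{a}^\intercal$'', their product is again identity plus a scalar multiple of $\mathbf{a}\mathbf{a}^\intercal$, and the whole proof reduces to showing that this scalar vanishes.

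The single arithmetic fact driving everything is the value of $\mathbf{a}^\intercal\mathbf{a}$. Since $\boldsymbol{\alpha}^{-i}$ is $\boldsymbol{\alpha}$ with its $i$-th entry deleted, and the normalization gives $\boldsymbol{\alpha}^\intercal\boldsymbol{\alpha}=\sum_{j=1}^n\alpha_j^2=n$, I obtain $\mathbf{a}^\intercal\mathbf{a}=\sum_{j\neq i}\alpha_j^2=n-\alpha_i^2$. The hypothesis $\alpha_i\neq 0$ ensures $\alpha_i^2>0$, which is exactly what keeps the factor $\alpha_i^{-2}$ in the candidate inverse well-defined.

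With this in hand I would expand $\mathbf{M}\big(\mathbf{I}_{(n-1)\times(n-1)}+\alpha_i^{-2}\mathbf{a}\mathbf{a}^\intercal\big)$. Every cross term is a scalar multiple of $\mathbf{a}\mathbf{a}^\intercal$, and using $\mathbf{a}\mathbf{a}^\intercal\mathbf{a}\mathbf{a}^\intercal=(\mathbf{a}^\intercal\mathbf{a})\,\mathbf{a}\mathbf{a}^\intercal=(n-\alpha_i^2)\mathbf{a}\mathbf{a}^\intercal$ the coefficient of $\mathbf{a}\mathbf{a}^\intercal$ collapses to $\alpha_i^{-2}-\tfrac{1}{n}-\tfrac{n-\alpha_i^2}{n\alpha_i^2}$, which is identically zero upon placing it over the common denominator $n\alpha_i^2$. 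Hence the product equals $\mathbf{I}_{(n-1)\times(n-1)}$; since the matrices are square, this one computation certifies a genuine two-sided inverse. (Equivalently, Sherman--Morrison gives the denominator $1-\tfrac{1}{n}(n-\alpha_i^2)=\tfrac{\alpha_i^2}{n}$, whose reciprocal produces precisely the factor $\alpha_i^{-2}$ appearing in the statement.)

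There is no genuine obstacle here; the only points requiring care are bookkeeping ones. The first is keeping track that deleting the $i$-th component turns $\boldsymbol{\alpha}^\intercal\boldsymbol{\alpha}=n$ into $\mathbf{a}^\intercal\mathbf{a}=n-\alpha_i^2$ rather than leaving it at $n$; the second is confirming that the scalar $\tfrac{\alpha_i^2}{n}$ in the denominator is nonzero precisely under the standing assumption $\alpha_i\neq 0$. Everything else is routine scalar algebra on the coefficient of $\mathbf{a}\mathbf{a}^\intercal$.
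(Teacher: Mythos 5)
Your proof is correct and follows essentially the same route as the paper, which likewise verifies the identity by multiplying the two sides and checking that the product is the identity matrix (the paper leaves the algebra as ``some algebraic manipulations''). You have simply made explicit the key computation the paper omits, namely $(\boldsymbol{\alpha}^{-i})^\intercal\boldsymbol{\alpha}^{-i}=n-\alpha_i^2$ and the resulting cancellation of the coefficient of the rank-one term.
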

\begin{proof}
After some algebraic manipulations, one can show that the  multiplication  of the right- and left-hand sides result in an identity matrix.
\end{proof}

\subsection{Step 4: Updates}

Define the information set of the innovation processes at time $t+1$ as follows:
\begin{align}
{\mathcal H}^{\perp}_{t+1}:&=\{y^i_{t+1} -\Exp{y^i_{t+1}\mid \mathcal{H}_t^g}, \forall i \in \mathbb{N}_n \}\\
&=\{\Delta p^i_{t+1}, \forall i \in \mathbb{N}_n \} \cup  \{\bar p_{t+1}\}.
\end{align}
It is important to mention that 
$x \perp y$ (i.e. $\Exp{x y^\intercal}=\mathbf 0$), $\forall x \in \mathcal H^{\perp}_{t+1}, \forall y \in \mathcal{H}_t^g$,
where  the innovation processes at time $t+1$ provide an orthogonal decomposition from $\mathcal{H}^{g}_t$ by construction, on  noting that measurement and driving noises  at  $t+1$ have zero mean  and are  independent  of the history set $\mathcal{H}^g_t$. Thus,  history set $\mathcal{H}_{t+1}$ can be decomposed across time horizon into two orthogonal sets such that
\begin{equation}
\mathcal{H}_{t+1}=\mathcal{H}_t^g \cup \mathcal{H}^{\perp}_{t+1}.
\end{equation}

From Lemmas~\ref{lemma:Summation} and~\ref{lemma:independent_g}, we obtain 
\begin{align}\label{eq:update_bar_z}
z_{t+1\mid t+1}&=\Exp{\bar x_{t+1}\mid \mathcal{H}_{t+1}}
=\Exp{\bar x_{t+1}\mid \mathcal{H}_{t}}+ \Exp{\bar x_{t+1} \mid  \mathcal{H}^{\perp}_{t+1}} \nonumber \\
&=z_{t+1\mid t}+ \Exp{\bar x_{t+1} \mid  \mathcal{H}^{\perp}_{t+1}}.
\end{align}

\begin{Lemma}[Global update]\label{lemma:global} The update of the estimate of the deep state given $\mathcal{H}^{\perp}_{t+1}$ can be computed as follows:
\begin{align}
& \Exp{\bar x_{t+1} \mid  \mathcal{H}^{\perp}_{t+1}}
=\Exp{\bar x_{t+1} \mid  \bar p_{t+1}}=\bar \Sigma_{t+1 \mid t} (C_{t+1}+\bar C_{t+1})^\intercal\\
&\quad \times \Big( (C_{t+1}+\bar C_{t+1}) \bar \Sigma_{t+1 \mid t} (C_{t+1}+\bar C_{t+1})^\intercal \\
&\quad + (S_{t+1}+\bar S_{t+1}) \COV(\bar v_{t+1})(S_{t+1}+\bar S_{t+1})^\intercal \Big)^{-1}\\
&\quad \times (\bar y_{t+1}- (C_{t+1}+\bar C_{t+1}) z_{t+1 \mid t}).
\end{align}
\end{Lemma}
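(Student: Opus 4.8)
The plan is to exploit the orthogonal decomposition of the innovation set $\mathcal{H}^{\perp}_{t+1}=\{\Delta p^i_{t+1}\}_{i=1}^n \cup \{\bar p_{t+1}\}$ so that the high-dimensional conditioning collapses onto the single low-dimensional innovation $\bar p_{t+1}$, after which a direct linear-estimator computation yields the stated gain. First I would split the conditioning variables into the stacked auxiliary-innovation vector $\Delta \mathbf{P}_{t+1}:=\VEC(\Delta p^1_{t+1},\ldots,\Delta p^n_{t+1})$ and the block $\bar p_{t+1}$. By property~1 of Lemma~\ref{lemma:Orthogonality_ innovation_processes}, $\Delta p^i_{t+1}\perp\bar p_{t+1}$ for every $i$, so $\Delta \mathbf{P}_{t+1}$ and $\bar p_{t+1}$ are two jointly Gaussian, zero-mean, mutually independent groups. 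Lemma~\ref{lemma:Summation} then applies across this split and gives $\Exp{\bar x_{t+1}\mid\mathcal{H}^{\perp}_{t+1}}=\Exp{\bar x_{t+1}\mid\Delta \mathbf{P}_{t+1}}+\Exp{\bar x_{t+1}\mid\bar p_{t+1}}$.

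Next I would eliminate the first term. Property~2 of Lemma~\ref{lemma:Orthogonality_ innovation_processes} states $\Delta p^i_{t+1}\perp\bar x_{t+1}$ for every $i$, i.e. $\Exp{\bar x_{t+1}(\Delta p^i_{t+1})^\intercal}=\mathbf 0$; since the variables are jointly Gaussian, the final claim of Lemma~\ref{lemma:Gaussian_orth} upgrades this to genuine independence, whence $\Exp{\bar x_{t+1}\mid\Delta \mathbf{P}_{t+1}}=\mathbf 0$. This establishes the first claimed equality $\Exp{\bar x_{t+1}\mid\mathcal{H}^{\perp}_{t+1}}=\Exp{\bar x_{t+1}\mid\bar p_{t+1}}$, which is the conceptual heart of the lemma: the auxiliary innovations carry no information about the deep state.

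It then remains to evaluate $\Exp{\bar x_{t+1}\mid\bar p_{t+1}}$ through the linear estimator of Lemma~\ref{lemma:Gaussian_orth}, namely $\Exp{\bar x_{t+1}\mid\bar p_{t+1}}=\Exp{\bar x_{t+1}\bar p_{t+1}^\intercal}\,\COV(\bar p_{t+1})^{-1}\bar p_{t+1}$. Using the deep-measurement equation in~\eqref{eq:dynamics_deep} together with~\eqref{eq:deep_state_est_dynamics}, I would write $\bar p_{t+1}=(C_{t+1}+\bar C_{t+1})\xi_{t+1\mid t}+(S_{t+1}+\bar S_{t+1})\bar v_{t+1}$ and observe that $\bar y_{t+1}-(C_{t+1}+\bar C_{t+1})z_{t+1\mid t}=\bar p_{t+1}$, which will reproduce the measurement-residual factor in the displayed formula. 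Because $\bar p_{t+1}$ is orthogonal to $\mathcal{H}^g_t$ while $z_{t+1\mid t}$ is $\mathcal{H}^g_t$-measurable, the cross term vanishes and $\Exp{\bar x_{t+1}\bar p_{t+1}^\intercal}=\Exp{\xi_{t+1\mid t}\bar p_{t+1}^\intercal}=\bar\Sigma_{t+1\mid t}(C_{t+1}+\bar C_{t+1})^\intercal$, using $\xi_{t+1\mid t}\perp\bar v_{t+1}$ and the definition of $\bar\Sigma_{t+1\mid t}$ in~\eqref{eq:global_etimates}. Likewise $\COV(\bar p_{t+1})=(C_{t+1}+\bar C_{t+1})\bar\Sigma_{t+1\mid t}(C_{t+1}+\bar C_{t+1})^\intercal+(S_{t+1}+\bar S_{t+1})\COV(\bar v_{t+1})(S_{t+1}+\bar S_{t+1})^\intercal$. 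Substituting these two covariances into the linear-estimator formula yields the displayed gain.

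I expect the main obstacle to be the first equality rather than the covariance algebra: the reduction to $\bar p_{t+1}$ rests entirely on the orthogonality properties of Lemma~\ref{lemma:Orthogonality_ innovation_processes}, and care is required because the auxiliary innovations $\{\Delta p^i_{t+1}\}$ are mutually \emph{correlated} (per the preceding Remark), so Lemma~\ref{lemma:Summation} may be invoked only across the independent split $(\Delta \mathbf{P}_{t+1},\bar p_{t+1})$ and never within the auxiliary block. A secondary subtlety is the standing convention that these conditional expectations are computed relative to the already-conditioned (hence zero-mean) distribution given $\mathcal{H}^g_t$, so that the increment being evaluated is really $\Exp{\xi_{t+1\mid t}\mid\bar p_{t+1}}$; this is what legitimizes the zero-mean hypotheses of Lemmas~\ref{lemma:Gaussian_orth} and~\ref{lemma:Summation} in the present setting.
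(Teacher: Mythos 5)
Your proposal is correct and follows essentially the same route as the paper's own proof: it collapses $\Exp{\bar x_{t+1}\mid\mathcal{H}^{\perp}_{t+1}}$ onto the single innovation $\bar p_{t+1}$ via Lemmas~\ref{lemma:Summation} and~\ref{lemma:Orthogonality_ innovation_processes}, and then evaluates the linear estimator of Lemma~\ref{lemma:Gaussian_orth} using $\COV(\bar p_{t+1})$, the cross-covariance $\Exp{\bar x_{t+1}\bar p_{t+1}^\intercal}=\bar\Sigma_{t+1\mid t}(C_{t+1}+\bar C_{t+1})^\intercal$, and $\bar p_{t+1}=\bar y_{t+1}-(C_{t+1}+\bar C_{t+1})z_{t+1\mid t}$, exactly as the paper does. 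The only difference is that you make explicit several steps the paper leaves implicit---the independent split $(\Delta\mathbf{P}_{t+1},\bar p_{t+1})$, the vanishing of $\Exp{\bar x_{t+1}\mid\Delta\mathbf{P}_{t+1}}$, and the centering convention under which the increment is really $\Exp{\xi_{t+1\mid t}\mid\bar p_{t+1}}$---all of which are correct refinements rather than a different method.
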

\begin{proof}
From Lemmas~\ref{lemma:Gaussian_orth} and~\ref{lemma:Orthogonality_ innovation_processes}, and equation~\eqref{eq:update_bar_z}, one has 
\begin{equation}
\Compress
\Exp{\bar x_{t+1} \mid  \mathcal{H}^{\perp}_{t+1}}=\Exp{\bar x_{t+1} \mid  \bar p_{t+1}}=\Exp{\bar x_{t+1} \bar p_{t+1}^\intercal} \COV(\bar p_{t+1})^{-1}\bar p_{t+1}.
\end{equation}
The one step update of the covariance matrix of the global innovation process  is calculated as
$ \COV(\bar p_{t+1})= \Exp{(\bar y_{t+1} -\Exp{\bar y_{t+1}\mid \mathcal{H}_t^g}) (\bar y_{t+1} -\Exp{\bar y_{t+1}\mid \mathcal{H}_t^g})^\intercal}
=(C_{t+1}+\bar C_{t+1}) \bar \Sigma_{t+1 \mid t} (C_{t+1}+\bar C_{t+1})^\intercal + \COV((S_{t+1}+\bar S_{t+1})\bar v_{t+1})$.
Lastly, from~\eqref{eq:dynamics_deep} and~\eqref{eq:global_etimates},  one arrives at:
$
\Exp{\bar y_{t+1} \mid \mathcal{H}_t^g}= (C_{t+1} +\bar C_{t+1})z_{t+1 \mid t}$.
\end{proof}
From Lemmas~\ref{lemma:Summation} and~\ref{lemma:independent_g}, we obtain 
\begin{multline}\label{eq:update_local_estimate_1}
\Delta{\hat x}^i_{t+1 \mid t+1}=\Exp{\Delta{x}^i_{t+1}\mid \mathcal{H}_{t+1}}
=\Exp{\Delta{x}^i_{t+1}\mid \mathcal{H}_{t}}\\
+ \Exp{\Delta{x}^i_{t+1}\mid  \mathcal{H}^{\perp}_{t+1}}
=\Delta{\hat x}^i_{t+1 \mid t}+ \Exp{\Delta x^i_{t+1} \mid  \mathcal{H}^{\perp}_{t+1}}.
\end{multline}

\begin{Lemma}[Auxiliary update]\label{lemma:local}
The estimate update of the auxiliary  state of agent $i \in \mathbb{N}_n$ given $\mathcal{H}^{\perp}_{t+1}$ is provided by:
\begin{align}
&\Exp{\Delta x^i_{t+1} \mid  \mathcal{H}^{\perp}_{t+1}}= \Exp{\Delta x^i_{t+1} \mid \Delta p^i_{t+1}}\\
&= \Sigma_{t+1 \mid t} C_{t+1}^\intercal \Big( C_{t+1} \Sigma_{t+1 \mid t} C_{t+1}^\intercal + S_{t+1} \Sigma^v_{t+1} S_{t+1}^\intercal\Big)^{-1} \\
&\times (\Delta y^i_{t+1} - C_{t+1} \Delta \hat x^i_{t+1|t}).
\end{align}
\end{Lemma}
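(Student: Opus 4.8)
The plan is to reproduce the logic of the global update (Lemma~\ref{lemma:global}), the only new difficulty being that the auxiliary innovations are mutually correlated (cf.\ the Remark preceding Lemma~\ref{lemma:Orthogonality_ innovation_processes}), so the conditioning set must first be thinned before the Gaussian regression formula applies. First I would discard the global innovation. Since $\mathcal H^{\perp}_{t+1}=\{\Delta p^j_{t+1}\}_{j\in\mathbb N_n}\cup\{\bar p_{t+1}\}$, and since Lemma~\ref{lemma:Orthogonality_ innovation_processes} gives $\Delta x^i_{t+1}\perp\bar p_{t+1}$ together with $\Delta p^j_{t+1}\perp\bar p_{t+1}$, Lemma~\ref{lemma:Summation} yields
\begin{equation}
\Exp{\Delta x^i_{t+1}\mid\mathcal H^{\perp}_{t+1}}=\Exp{\Delta x^i_{t+1}\mid\{\Delta p^j_{t+1}\}_{j\in\mathbb N_n}}+\Exp{\Delta x^i_{t+1}\mid\bar p_{t+1}}=\Exp{\Delta x^i_{t+1}\mid\{\Delta p^j_{t+1}\}_{j\in\mathbb N_n}},
\end{equation}
the last term vanishing by orthogonality (Lemma~\ref{lemma:Gaussian_orth}). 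Applying Lemma~\ref{lemma:LD_aux} to the observations shows $\sum_{j}\alpha_j\Delta p^j_{t+1}=\mathbf 0$, i.e.\ $(\boldsymbol\alpha^{-i})^\intercal\Delta\mathbf P^{-i}_{t+1}=-\alpha_i\Delta p^i_{t+1}$; hence $\Delta p^i_{t+1}$ is a deterministic linear functional of $\Delta\mathbf P^{-i}_{t+1}$, the full family and $\Delta\mathbf P^{-i}_{t+1}$ generate the same $\sigma$-algebra, and so $\Exp{\Delta x^i_{t+1}\mid\{\Delta p^j_{t+1}\}_j}=\Exp{\Delta x^i_{t+1}\mid\Delta\mathbf P^{-i}_{t+1}}$. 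The reason I route the computation through $\Delta\mathbf P^{-i}_{t+1}$ rather than the full family is precisely that the latter has a singular covariance.

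Second, I would evaluate this projection with the Gaussian regression formula of Lemma~\ref{lemma:Gaussian_orth}, substituting the cross-covariance of Lemma~\ref{lemma:cross_terms}(3), the auto-covariance of Lemma~\ref{lemma:cross_terms}(4), and the inverse supplied by Proposition~\ref{lemma:inverse} together with $(A\otimes B)^{-1}=A^{-1}\otimes B^{-1}$. Writing $M_{t+1}:=C_{t+1}\Sigma_{t+1\mid t}C_{t+1}^\intercal+S_{t+1}\Sigma^v_{t+1}S_{t+1}^\intercal$ and collecting the agent-index factors through the mixed-product rule $(A\otimes B)(C\otimes D)=(AC)\otimes(BD)$, the whole $(n-1)$-block expression reduces to a single Kronecker factor built from $\Sigma_{t+1\mid t}C_{t+1}^\intercal M_{t+1}^{-1}$ and the agent-index row $(\boldsymbol\alpha^{-i})^\intercal(\mathbf I+\alpha_i^{-2}\boldsymbol\alpha^{-i}(\boldsymbol\alpha^{-i})^\intercal)$. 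The arithmetic core is the normalization $\tfrac1n\boldsymbol\alpha^\intercal\boldsymbol\alpha=1$, i.e.\ $(\boldsymbol\alpha^{-i})^\intercal\boldsymbol\alpha^{-i}=n-\alpha_i^2$, which collapses that row to $(\boldsymbol\alpha^{-i})^\intercal(\mathbf I+\alpha_i^{-2}\boldsymbol\alpha^{-i}(\boldsymbol\alpha^{-i})^\intercal)=\tfrac{n}{\alpha_i^2}(\boldsymbol\alpha^{-i})^\intercal$.

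Finally, applying this row to $\Delta\mathbf P^{-i}_{t+1}$ and invoking $(\boldsymbol\alpha^{-i})^\intercal\Delta\mathbf P^{-i}_{t+1}=-\alpha_i\Delta p^i_{t+1}$ from the first step, the scalar prefactors satisfy $\tfrac{-\alpha_i}{n}\cdot\tfrac{n}{\alpha_i^2}\cdot(-\alpha_i)=1$, so everything cancels and leaves
\begin{equation}
\Exp{\Delta x^i_{t+1}\mid\Delta\mathbf P^{-i}_{t+1}}=\Sigma_{t+1\mid t}C_{t+1}^\intercal M_{t+1}^{-1}\,\Delta p^i_{t+1}.
\end{equation}
Because the right-hand side is a function of $\Delta p^i_{t+1}$ alone and $\sigma(\Delta p^i_{t+1})\subseteq\sigma(\Delta\mathbf P^{-i}_{t+1})$, the projection onto the smaller set coincides, which yields the claimed identity $\Exp{\Delta x^i_{t+1}\mid\mathcal H^{\perp}_{t+1}}=\Exp{\Delta x^i_{t+1}\mid\Delta p^i_{t+1}}$; substituting $\Delta p^i_{t+1}=\Delta y^i_{t+1}-C_{t+1}\Delta\hat x^i_{t+1\mid t}$ from~\eqref{eq:innovation_pro_def} then produces the stated closed form. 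I expect the main obstacle to be the second step: steering the $(n-1)$-dimensional covariance inversion through the Kronecker bookkeeping so that it collapses to a scalar multiple of $\Delta p^i_{t+1}$ — which is exactly what Proposition~\ref{lemma:inverse} and the index-invariant scaling of $\Sigma_{t+1\mid t}$ in~\eqref{eq:sigma_not_i} are designed to make work.
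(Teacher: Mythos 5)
Your proposal is correct and follows essentially the same route as the paper's own proof: the identical orthogonal split discarding $\bar p_{t+1}$ via Lemmas~\ref{lemma:Summation} and~\ref{lemma:Orthogonality_ innovation_processes}, the same reduction to $\Delta \mathbf p^{-i}_{t+1}$ via the linear dependence $(\boldsymbol\alpha^{-i})^\intercal \Delta \mathbf p^{-i}_{t+1}=-\alpha_i \Delta p^i_{t+1}$, and the same Gaussian-regression computation combining Lemma~\ref{lemma:cross_terms} with the rank-one inverse of Proposition~\ref{lemma:inverse}, with the scalar cancellation $\tfrac{-\alpha_i}{n}\cdot\tfrac{n}{\alpha_i^2}\cdot(-\alpha_i)=1$ matching the paper's $-\tfrac{\alpha_i}{n}-\tfrac{1}{n\alpha_i}(n-\alpha_i^2)=-\tfrac{1}{\alpha_i}$ step. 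Your tower-property justification of $\Exp{\Delta x^i_{t+1}\mid\mathcal H^{\perp}_{t+1}}=\Exp{\Delta x^i_{t+1}\mid\Delta p^i_{t+1}}$ is a minor refinement of a step the paper leaves implicit.
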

\begin{proof}
Let $\Delta \mathbf p_{t+1}=\VEC(\Delta p^1_{t+1},\ldots,\Delta p^n_{t+1})$.  We consider
\begin{multline}
\hspace{-.5cm}\Exp{\Delta x^i_{t+1} |  \mathcal{H}^{\perp}_{t+1}} \substack{(a)\\=}\Exp{\Delta x^i_{t+1} | \Delta \mathbf p_{t+1}} + \Exp{\Delta x^i_{t+1} | \bar{ \mathbf p}_{t+1}}\substack{(b)\\=}
\\
\hspace{-.2cm} \Exp{\Delta x^i_{t+1} | \Delta \mathbf p^{-i}_{t+1}}\substack{(c)\\=}\Exp{\Delta x^i_{t+1} (\Delta \mathbf p^{-i}_{t+1})^\intercal}\COV(\Delta \mathbf p^{-i}_{t+1})^{-1}\hspace{-.1cm}\Delta \mathbf p^{-i}_{t+1},
\end{multline}
where $(a)$ follows from Lemma~\ref{lemma:Summation}; $(b)$ follows from the fact that auxiliary innovation processes are linearly dependent according to~Lemma~\ref{lemma:LD_aux} and auxiliary states are independent of the global innovation process according  to  Lemma~\ref{lemma:Orthogonality_ innovation_processes}, and $(c)$ follows from Lemma~\ref{lemma:Gaussian_orth}. Thus, one can expand the above equation using~Lemma~\ref{lemma:cross_terms} and Proposition~\ref{lemma:inverse} to obtain
\begin{equation}\label{eq:aux_prood_2}
\begin{cases}
\Exp{\Delta x^i_{t+1} |  \mathcal{H}^{\perp}_{t+1}}\hspace{-.1cm}= \hspace{-.1cm}\frac{-\alpha_i}{n}\Sigma_{t+1 |t} C_{t+1}^\intercal \hspace{-.1cm}\otimes\hspace{-.1cm} (\boldsymbol \alpha^{-i})^\intercal (\mathbf I_{(n-1) \times (n-1)} -\\\frac{1}{n}  \boldsymbol \alpha^{-i} (\boldsymbol \alpha^{-i})^\intercal )^{-1}
\hspace{-.1cm} \otimes \hspace{-.1cm} (C_{t+1} \Sigma_{t+1|t}C_{t+1}^\intercal
\hspace{-.1cm}+\hspace{-.1cm} S_{t+1} \Sigma^v_{t+1}S_{t+1}^\intercal \hspace{-.05cm})^{-1}\hspace{-.05cm}\\
\times \Delta \mathbf p^{-i}_{t+1}
=\Sigma_{t+1 |t} C_{t+1}^\intercal \hspace{-.1cm} \otimes \hspace{-.1cm} ( -\frac{\alpha_i}{n}- \frac{1}{n\alpha_i} (\boldsymbol \alpha^{-i})^\intercal \boldsymbol \alpha^{-i})(\boldsymbol \alpha^{-i})^\intercal \\ \otimes (C_{t+1} \Sigma_{t+1|t}C_{t+1}^\intercal
+ S_{t+1} \Sigma^v_{t+1}S_{t+1}^\intercal)^{-1}\Delta \mathbf p^{-i}_{t+1}\\
=\hspace{-.1cm} \Sigma_{t+1 |t} C_{t+1}^\intercal (C_{t+1} \Sigma_{t+1|t}C_{t+1}^\intercal
\hspace{-.1cm}+\hspace{-.1cm} S_{t+1} \Sigma^v_{t+1}S_{t+1}^\intercal)^{-1}\Delta  p^{i}_{t+1},
\end{cases}
\end{equation}
where  $(\boldsymbol \alpha^{-i})^\intercal \Delta \mathbf p^{-i}_{t+1}=-\alpha_i  \Delta  p^{i}_{t+1}$ according to linearly dependent relations in Lemma~\ref{lemma:LD_aux} and~definition of auxiliary innovation processes  in \eqref{eq:innovation_pro_def}. The proof follows now from~\eqref{eq:innovation_pro_def} and~\eqref{eq:aux_prood_2}, on noting that
$
\Exp{\Delta y^i_{t+1}\mid \mathcal{H}^g_t}=C_{t+1} \Delta \hat x^i_{t+1|t}$.
\end{proof}
It can be shown from  Lemma 1 that $\bar \Sigma_{t+1|t+1}$ and $\Sigma_{t+1|t+1}$,  $\forall i \in \mathbb{N}_n$ satisfy the following equations:
\begin{equation}
\begin{cases}
\bar \Sigma_{t+1|t+1}\hspace{-.1cm} = \hspace{-.1cm}\bar \Sigma_{t+1|t+1}- \Exp{ \bar x_{t+1} \mid \mathcal{H}_{t+1}^{\perp}}(\Exp{ \bar x_{t+1} \mid \mathcal{H}_{t+1}^{\perp}})^\intercal=\\
\bar \Sigma_{t+1 \mid t} (C_{t+1}\hspace{-.1cm}+\hspace{-.1cm} \bar C_{t+1})^\intercal ( (C_{t+1}\hspace{-.1cm}+\hspace{-.1cm} \bar C_{t+1}) \bar \Sigma_{t+1 \mid t} (C_{t+1}\hspace{-.1cm}+\hspace{-.1cm}\bar C_{t+1})^\intercal +\\
 \hspace{-.1cm} ( \hspace{-.05cm}S_{t\hspace{-.02cm}+\hspace{-.02cm}1}\hspace{-.1cm}+\hspace{-.1cm}\bar S_{t\hspace{-.02cm}+\hspace{-.02cm}1}\hspace{-.05cm}) \COV(\bar v_{t\hspace{-.02cm}+\hspace{-.02cm}1})(\hspace{-.05cm}S_{t\hspace{-.02cm}+\hspace{-.02cm}1}\hspace{-.1cm}+\hspace{-.1cm}\bar S_{t\hspace{-.02cm}+\hspace{-.02cm}1}\hspace{-.05cm})^\intercal )^{-1}
(\hspace{-.05cm}C_{t+1}\hspace{-.1cm}+\hspace{-.1cm}\bar C_{t+1} \hspace{-.05cm})\bar \Sigma_{t+1 \mid t},\\
\hspace{-.1cm}\Sigma_{t+1 \mid t+1}\hspace{-.1cm}=\hspace{-.1cm} \Sigma_{t\hspace{-.02cm}+\hspace{-.02cm}1|t}\hspace{-.1cm}-\hspace{-.1cm}(1\hspace{-.1cm}-\hspace{-.1cm}\frac{\alpha_i^2}{n}) \Exp{\Delta x^i_{t\hspace{-.02cm}+\hspace{-.02cm}1} \hspace{-.1cm}\mid \hspace{-.1cm} \mathcal{H}_{t\hspace{-.02cm}+\hspace{-.02cm}1}^{\perp}}(\Exp{\Delta x^i_{t+1}\hspace{-.1cm} \mid\hspace{-.1cm} \mathcal{H}_{t+1}^{\perp}})^\intercal\\
\hspace{-.1cm}=\hspace{-.1cm}\Sigma_{t \hspace{-.02cm}+\hspace{-.02cm}1|t} C_{t\hspace{-.02cm}+\hspace{-.02cm}1}^\intercal\hspace{-.05cm} ( C_{t\hspace{-.02cm}+\hspace{-.02cm}1} \Sigma_{t\hspace{-.02cm}+\hspace{-.02cm}1|t} C_{t\hspace{-.02cm}+\hspace{-.02cm}1}^\intercal\hspace{-.1cm}+\hspace{-.1cm} S_{t\hspace{-.02cm}+\hspace{-.02cm}1} \Sigma^v_{t\hspace{-.02cm}+\hspace{-.02cm}1} S_{t+1}^\intercal \hspace{-.05cm})^{-1} C_{t+1}\Sigma_{t+1|t}
\end{cases}
\end{equation}

\subsection{Step 5: Kalman filters}
From equations~\eqref{eq:update_bar_z} and~\eqref{eq:update_local_estimate_1} and update rules  in Lemmas~\ref{lemma:global} and~\ref{lemma:local}, one gets two low-dimensional (scale-free) Kalman filters as follows. For every $i \in \mathbb{N}_n$,
\begin{equation}\label{eq:KF1}
\begin{cases}
\Delta \hat x^i_{t+1|t+1} = \Delta  \hat x^i_{t+1|t}+ L_{t+1}(\Delta y^i_{t+1}- C_{t+1} \Delta \hat x^i_{t+1|t}),\\
   \Delta  \hat x^i_{t+1|t}= A_t  \Delta \hat x^i_{t|t}+ B_t \Delta u^i_t,\\
   \Sigma_{t+1|t+1}= (\mathbf I - L_{t+1} C_{t+1})\Sigma_{t+1|t},\\
    \Sigma_{t+1|t}=A_t \Sigma_{t|t} A_t^\intercal +E_t \Sigma^w_t E_t^\intercal,\\
    L_{t+1}= \Sigma_{t+1|t} C_{t+1}^\intercal (C_{t+1} \Sigma_{t+1|t} C_{t+1}^\intercal +S_{t+1}\Sigma^v_{t+1} S_{t+1}^\intercal)^{-1},
\end{cases}
\end{equation}
with the initial conditions $\Delta \hat x^i_{1|0} =  (1- \frac{\alpha_i}{n}\sum_{j=1}^n \alpha_j)\mu^x$ and
    $\Sigma_{1|0}=\Sigma^x$. Furthermore,
\begin{equation}\label{eq:KF2}
\begin{cases}
z_{t+1|t+1} = z_{t+1|t}+ \bar L_{t+1}(\bar y_{t+1}- (C_{t+1}+\bar C_{t+1}) z_{t+1|t}),\\
   z_{t+1|t}= (A_t+\bar A_t)  z_{t|t}+ (B_t+\bar B_t) \bar u_t,\\
  \bar  \Sigma_{t+1|t+1}= (\mathbf I - \bar L_{t+1} (C_{t+1}+\bar C_{t+1}))\bar \Sigma_{t+1|t},\\
 \bar    \Sigma_{t+1|t}=(A_t+\bar A_t) \bar \Sigma_{t|t} (A_t+\bar A_t)^\intercal \\
\qquad \qquad  +\frac{1}{n}(E_t+\bar E_t)\Sigma^w_t(E_t+\bar E_t)^\intercal,\\
  \bar   L_{t+1}= \bar \Sigma_{t+1|t} (C_{t+1}+\bar C_{t+1})^\intercal ((C_{t+1}+\bar C_{t+1}) \bar \Sigma_{t+1|t} \\
  \times (C_{t+1}\hspace{-.1cm}+ \hspace{-.1cm}\bar C_{t+1})^\intercal \hspace{-.1cm}+\hspace{-.1cm}\frac{1}{n}(S_{t+1}+\bar S_{t+1}) \Sigma^v_{t+1}(S_{t+1}\hspace{-.1cm}+\hspace{-.1cm}\bar S_{t+1})^\intercal)^{-1},
\end{cases}
\end{equation} 
with the conditions $z_{1|0} =(\frac{1}{n}\sum_{j=1}^n \alpha_j) \mu^x$ and 
    $\bar \Sigma_{1|0}=\frac{1}{n}\Sigma^x$.
    \section{Main results}\label{sec:main}
We impose the following standard assumption.
\begin{Assumption}
Let matrices $Q_t$ and $Q_t+\bar Q_t$ be symmetric and positive semi-definite and matrices $R_t$ and $R_t+\bar R_t$ be symmetric and positive definite for every time $t \in \mathbb{N}$.
\end{Assumption}
For any  $ t \in \mathbb{N}_{T-1}$, define the following Riccati equations:
\begin{equation}\label{eq:Riccati}
\begin{cases}
P_{t}= Q_t+ A_t^\intercal P_{t+1} A_t - A_t^\intercal P_{t+1} B_t (B_t^\intercal P_{t+1} B_t+ R_t)^{-1}\\
 \times  B_t^\intercal P_{t+1} A_t, \\
\bar P_{t}= Q_t+\bar Q_t+ (A_t+\bar A_t)^\intercal \bar P_{t+1} (A_t+\bar A_t) - (A_t+\bar A_t)^\intercal \\
\times \bar P_{t+1} (B_t+\bar B_t) ((B_t+\bar B_t)^\intercal \bar P_{t+1} (B_t+\bar B_t)+ R_t \hspace{-.05cm}+\hspace{-.05cm} \bar R_t)^{-1}\\
 \times  (B_t+\bar B_t)^\intercal \bar P_{t+1} (A_t+\bar A_t), 
\end{cases}
\end{equation}
where $P_T=Q_T$ and  $\bar  P_T=Q_t+\bar Q_T$.
\vspace{-.1cm}
\subsection{Solution of Problem~\ref{prob1}}
\begin{Theorem}[Decentralized estimation, optimal control,  and separation principle]\label{thm}
Let Assumption~1 hold. The optimal strategy of agent $i \in \mathbb{N}_n$ at time $t \in \mathbb{N}$ under the decentralized  information structure IDSS is given by
\begin{equation}\label{eq:optimal_strategy}
u^{\ast,i}_t= \theta^\ast_t \hat x^i_{t|t} + \alpha_i(\bar \theta^\ast_t - \theta^\ast_t)  z_{t|t},
\end{equation}
 where local and global estimates are computed by two  scale-free Kamlan filters in~\eqref{eq:KF1} and~\eqref{eq:KF2} as follows:
\begin{equation} 
\begin{cases}
\hat x^i_{t+1|t}\hspace{-.1cm}=\hspace{-.1cm} \Exp{x^i_{t+1} \hspace{-.1cm}\mid \hspace{-.1cm} \mathcal{H}_{t}}= A_t \hat x^i_{t|t}+ B_t u^i_{t}+ \alpha_i( \bar A_t z_{t|t}+\bar B_t \bar u_t),\\ 
\hat x^i_{t+1|t+1}\hspace{-.1cm}=\hspace{-.1cm} \Exp{x^i_{t+1} \hspace{-.1cm} \mid  \hspace{-.1cm}\mathcal{H}_{t+1}}\hspace{-.1cm}=\hspace{-.1cm} \hat x^i_{t+1|t} \hspace{-.1cm} + \hspace{-.1cm} L_{t+1}(y^i_{t+1} - C_{t+1} \hat x^i_{t+1|t} )\\
\qquad +\alpha_i(\bar L_{t+1} - L_{t+1})(\bar y_{t+1}- (C_{t+1}+\bar C_{t+1}) z_{t+1|t}),\\
z_{t+1|t}= \Exp{\bar x_{t+1}\hspace{-.1cm}\mid \hspace{-.1cm}\mathcal{H}_{t}}= (A_t+\bar A_t) z_{t|t}+ (B_t+ \bar B_t)\bar u_t,\\ 
 z_{t+1|t+1}= \Exp{\bar x_{t+1} \hspace{-.1cm}\mid \hspace{-.1cm} \mathcal{H}_{t+1}}=z_{t+1|t} \\
\qquad + \bar L_{t+1} (\bar y_{t+1}- (C_{t+1}+\bar C_{t+1}) z_{t+1|t}).
\end{cases}
\end{equation}
In addition, the local and global gains $\theta^\ast_t$ and $\bar \theta_t^\ast$ are obtained by two scale-free Riccati equations in~\eqref{eq:Riccati} as
\begin{equation}
\begin{cases}
 \theta^\ast_t:=-(B_t^\intercal P_{t+1} B_t + R_t)^{-1} B_t^\intercal  P_{t+1} A_t,\\
\bar \theta^\ast_t:=-( (B_t+\bar B_t)^\intercal \bar P_{t+1} (B_t+\bar B_t) + R_t+ \bar R_t)^{-1}\\
\qquad \times (B_t+\bar B_t)^\intercal \bar P_{t+1} (A_t+\bar A_t).
\end{cases}
\end{equation}
\end{Theorem}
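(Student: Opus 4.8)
The plan is to use the gauge transformation~\eqref{eq:gauge} to reduce Problem~\ref{prob1} to two \emph{decoupled} linear-quadratic-Gaussian problems---a single \emph{global} problem in the deep coordinates $(\bar x_t,\bar u_t)$ and a collection of identical \emph{local} problems in the auxiliary coordinates $(\Delta x^i_t,\Delta u^i_t)$---and then to solve each by the classical separation principle, borrowing the scale-free filters~\eqref{eq:KF1}--\eqref{eq:KF2} of Section~\ref{sec:estimation} for estimation and backward dynamic programming for control. First I would decompose the objective: by Lemma~\ref{lemma:cost} the averaged per-step cost splits into a deep part carrying weights $(Q_t+\bar Q_t,R_t+\bar R_t)$ in $(\bar x_t,\bar u_t)$ plus a local part carrying weights $(Q_t,R_t)$ in $(\Delta x^i_t,\Delta u^i_t)$, while~\eqref{eq:dynamics_deep} shows the deep and auxiliary states evolve under separate uncoupled dynamics. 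Since $u^i_t=\Delta u^i_t+\alpha_i\bar u_t$ is a bijection between $\{u^i_t\}$ and $(\bar u_t,\{\Delta u^i_t\})$ subject only to the constraint $\sum_i\alpha_i\Delta u^i_t=\mathbf 0$ of Lemma~\ref{lemma:LD_aux}, and since $\Delta x^i_t\perp\bar x_t$ by Lemma~\ref{lemma:Relations_primitive_rv}, the team cost separates \emph{additively} into a global cost and a sum of local costs, with no quadratic cross terms surviving the expectation.

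Next I would settle the estimation. Using the orthogonal time-decomposition $\mathcal H_{t+1}=\mathcal H^g_t\cup\mathcal H^\perp_{t+1}$ together with Lemma~\ref{lemma:Relations_primitive_rv}, the deep state is independent of every auxiliary observation and vice versa, so $z_{t|t}=\Exp{\bar x_t\mid\mathcal H_t}$ is measurable with respect to the common data $\bar y_{1:t}$ alone, whereas $\Delta\hat x^i_{t|t}=\Exp{\Delta x^i_t\mid\mathcal H_t}$ depends only on agent $i$'s data $\Delta y^i_{1:t}$; these are exactly the filters~\eqref{eq:KF2} and~\eqref{eq:KF1}, and by Lemma~\ref{lemma:independent_g} neither depends on the strategy. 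Writing $\Delta x^i_t=\Delta\hat x^i_{t|t}+e^i_{t|t}$ and $\bar x_t=z_{t|t}+\xi_{t|t}$ and invoking the orthogonality of each estimate to its own error, every quadratic cost term decomposes into a \emph{controllable} piece depending on the estimate and an \emph{uncontrollable} estimation-error piece; the latter is an additive constant whose cross-agent correlations, quantified in Lemma~\ref{lemma:Relations_primitive_rv}, affect only the attained value and not the minimizer.

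Then I would carry out the control step by backward induction on a value function that is quadratic and separable across the deep and auxiliary coordinates. Completing the square in each decoupled subproblem produces the certainty-equivalent feedbacks $\bar u^\ast_t=\bar\theta^\ast_t z_{t|t}$ and $\Delta u^{\ast,i}_t=\theta^\ast_t\Delta\hat x^i_{t|t}$, with $\theta^\ast_t,\bar\theta^\ast_t$ generated by the two Riccati recursions~\eqref{eq:Riccati}; the admissibility constraint is met automatically because $\sum_i\alpha_i\Delta u^{\ast,i}_t=\theta^\ast_t\,\Exp{\sum_i\alpha_i\Delta x^i_t\mid\mathcal H_t}=\mathbf 0$ by Lemma~\ref{lemma:LD_aux}. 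Recombining via $u^{\ast,i}_t=\Delta u^{\ast,i}_t+\alpha_i\bar u^\ast_t$ and substituting $\Delta\hat x^i_{t|t}=\hat x^i_{t|t}-\alpha_i z_{t|t}$ yields the claimed strategy~\eqref{eq:optimal_strategy}; the fact that $\theta^\ast_t,\bar\theta^\ast_t$ depend only on the cost data while $L_{t+1},\bar L_{t+1}$ depend only on the noise statistics is precisely the asserted separation principle.

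I expect the main obstacle to be justifying that this clean separation survives the \emph{non-classical} information structure IDSS, where certainty equivalence generically fails and, as noted after Lemma~\ref{lemma:Relations_primitive_rv}, the auxiliary uncertainties are genuinely correlated across agents. The resolution is that the gauge transformation renders each subproblem effectively classical---agent $i$ has full information for its own auxiliary system and all agents share full information for the deep system---so the cross-agent correlations enter only the uncontrollable error-covariance terms and leave the optimal feedback laws untouched. The delicate point to verify is that \emph{no} deep/auxiliary cross terms are generated anywhere in the backward recursion; this rests on the orthogonalities $\Delta x^i_t\perp\bar x_t$ and $\Delta p^i_{t+1}\perp\bar p_{t+1}$ established in Section~\ref{sec:estimation}, which must be shown to propagate consistently through both the filtering updates and the dynamic-programming value function.
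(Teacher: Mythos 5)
Your proposal follows essentially the same route as the paper's own (very compact) proof: decompose the cost by the gauge transformation via Lemma~\ref{lemma:cost}, replace the states by their strategy-independent conditional estimates (Lemma~\ref{lemma:independent_g}) computed by the scale-free filters~\eqref{eq:KF1}--\eqref{eq:KF2}, and complete squares backward to reduce the residual problem to the perfect deep-state-sharing LQ team solved by the Riccati equations~\eqref{eq:Riccati}; your recombination $u^{\ast,i}_t=\Delta u^{\ast,i}_t+\alpha_i \bar u^\ast_t$ together with $\Delta \hat x^i_{t|t}=\hat x^i_{t|t}-\alpha_i z_{t|t}$, and the automatic satisfaction of $\sum_i \alpha_i \Delta u^{\ast,i}_t=\mathbf 0$ via Lemma~\ref{lemma:LD_aux}, is exactly the paper's reduction, and you even supply more detail than the published proof does.

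One step in your estimation narrative, however, is wrong as stated and happens to sit on the crux of the paper. You write that ``the deep state is independent of every auxiliary observation and vice versa, so \ldots $\Delta\hat x^i_{t|t}$ depends only on agent $i$'s data $\Delta y^i_{1:t}$.'' The independence that actually holds (established inside the proof of Lemma~\ref{lemma:Orthogonality_ innovation_processes}) is between the block $\{\Delta x^i,\Delta y^i,\ \forall i\}$ and the block $\{\bar x,\bar y\}$; it justifies that $z_{t|t}$ is driven by $\bar y_{1:t}$ alone, but it says nothing about the per-agent collapse of the auxiliary estimate. For finite $n$ the auxiliary variables of \emph{different} agents are genuinely correlated --- Lemma~\ref{lemma:Relations_primitive_rv} gives $\Exp{\Delta x^i_1 (\Delta x^j_1)^\intercal}=-\frac{\alpha_i\alpha_j}{n}\Sigma^x\neq \mathbf 0$ and Remark~1 records $\Delta p^i_{t+1}\not\perp\Delta p^j_{t+1}$ --- so $\Exp{\Delta x^i_{t+1}\mid \mathcal H^\perp_{t+1}}$ does not reduce to $\Exp{\Delta x^i_{t+1}\mid \Delta p^i_{t+1}}$ by any independence argument; an independence-based derivation of~\eqref{eq:KF1} would simply fail. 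That collapse is the nontrivial content of the paper's auxiliary-update lemma (Lemma~\ref{lemma:local}), which needs the linear dependence $\sum_j \alpha_j \Delta p^j_{t+1}=\mathbf 0$ from Lemma~\ref{lemma:LD_aux} and~\eqref{eq:innovation_pro_def}, the cross-covariances of Lemma~\ref{lemma:cross_terms}, and the explicit rank-one matrix inversion of Proposition~\ref{lemma:inverse}: conditioning on \emph{all} the correlated innovations happens to produce the same gain as conditioning on $\Delta p^i_{t+1}$ alone. Since you declared at the outset that you are borrowing~\eqref{eq:KF1}--\eqref{eq:KF2} from Section~\ref{sec:estimation}, your proof of the theorem itself still stands on the cited results; but the independence sentence should be replaced by citations to Lemma~\ref{lemma:local} (and Lemma~\ref{lemma:global} for the global update), otherwise a reader would take away an incorrect reason for the one fact that makes the finite-$n$ decentralized filter work.
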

\begin{proof}
The proof follows by using the completion-of-square method on the transformed cost function in Lemma~\ref{lemma:cost} and replacing the state values with their conditional expectations, defined in~\eqref{eq:global_etimates} and~\eqref{eq:local_etimates}. On the other hand, we showed that the   conditional expectations could be  computed recursively by Kalman filters in~\eqref{eq:KF1} and~\eqref{eq:KF2},  irrespective of the control strategy (also known as the separation principle~\cite{aoki1967optimization,aastrom2012introduction,Witsenhausen1971separation,bar1974dual}). Note that the separation principle is weaker than the certainty equivalence principle~\cite{van1981certainty}.  The remaining problem is an optimal LQ deep structured team with (perfect) deep state sharing whose solution  is obtained from the  Riccati equations in~\eqref{eq:Riccati}; see~\cite{Jalal2020CCTA,  Jalal2021CDC_MPC,Jalal2019risk}  for more details.
\end{proof}

From Theorem~\ref{thm}, the optimal solution of agent $i\in \mathbb{N}_n$ can be  implemented in a distributed manner. In particular, prior to the operation of the system, agent~$i$ solves two scale-free Riccati equations and Kalman filters to obtain $\{L_{1:T}, \bar L_{1:T}, \theta^\ast_{1:T}, \bar \theta^\ast_{1:T}\}$. At any time  $t \in \mathbb{N}_T$, agent $i$ estimates its local state $x^i_{t}$ by $\hat x^i_{t|t}$ and global state $\bar x_{t}$ by $z_{t|t}$ based on local and global noisy observations $y^i_{t}$ and $\bar y_{t}$. Then, given its private (influence factor) $\alpha_i$, agent~$i$ calculates its optimal strategy according to~\eqref{eq:optimal_strategy}.

\begin{Remark} 
\emph{The only information  shared among agents at each time instant $t$ is noisy deep state $\bar y_{t} \in \mathbb{R}^{d_y}$ whose size (dimension)  is independent of the number of agents $n$. }
\end{Remark}
Establishing Theorem~\ref{thm} for the special case of infinite population (i.e. $n=\infty$) is straightforward because  auxiliary  and global (deep)  variables are mutually orthogonal according to Lemma~\ref{lemma:Relations_primitive_rv} and  Remark~1. This makes it significantly easier to develop a low-dimensional Kalman filter for~$n~=~\infty$.   The same argument holds for the finite-population  case  where deep state is observed perfectly (i.e. $\bar x_t \in \mathcal{H}_t$); see mean-field teams in~\cite{arabneydi2016new,JalalCDC2015}  that considers such a  special case for homogeneous influence factors, where $\alpha_i=1$, $\forall i \in \mathbb{N}_n$.
\subsection{Asymptotic approximation for Problem~\ref{prob1}}

\begin{Theorem}\label{thmm}
As $n \rightarrow \infty$, covariance  matrices of  the global Kalman filter~\eqref{eq:KF2} converge  to zero at rate $1/n$. As a result, the followings hold at  any $t \in \mathbb{N}_T$  for the model with  $n=\infty$.
\begin{itemize}
\item \textbf{Blind optimal global estimator:} The global state  $\bar x_{t}$ is almost surely equal to its conditional  and  unconditional expectation, i.e.  $\bar x_{t} \hspace{-.05cm}= \hspace{-.05cm} z_{t|t} \hspace{-.05cm}= \hspace{-.05cm}\Exp{\bar x_t}$, which may be viewed  as the~certainty~equivalence approximation~\cite{van1981certainty} and (weighted) mean-field approximation~\cite{parisi1988statistical},~respectively.
\item \textbf{Optimal fully decentralized strategy:}  Agent~$i \in \mathbb{N}_n$  requires access only to  its local observation~$y^i_t$  to calculate~\eqref{eq:optimal_strategy}, leading to a fully decentralized strategy, where  the global observation is perfectly  predicted as follows: $\bar y_{t+1}=\lim_{n \rightarrow \infty} \frac{1}{n}\sum_{i=1}^n\alpha_i y^i_{t+1}=(C_{t+1}+\bar C_{t+1}) \lim_{n \rightarrow \infty}\bar x_{t+1}=(C_{t+1}+\bar C_{t+1}) \bar z_{t+1|t}$.
\end{itemize} 
\end{Theorem}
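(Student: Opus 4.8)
The plan is to establish the rate statement by a direct induction on the global Riccati recursion in~\eqref{eq:KF2}, and then obtain the two bullet points as consequences of the vanishing global uncertainty. First I would introduce the normalized matrices $\tilde\Sigma_{t|t}:=n\,\bar\Sigma_{t|t}$ and $\tilde\Sigma_{t+1|t}:=n\,\bar\Sigma_{t+1|t}$ and show by induction on $t$ that they satisfy an $n$-independent recursion, so that each is a fixed bounded matrix and hence $\bar\Sigma_{t|t},\bar\Sigma_{t+1|t}=O(1/n)$. The base case is immediate from $\bar\Sigma_{1|0}=\frac1n\Sigma^x$, i.e. $\tilde\Sigma_{1|0}=\Sigma^x$. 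For the prediction step, both the propagated term and the process-noise term in~\eqref{eq:KF2} carry the factor $1/n$, giving
\begin{equation}
\tilde\Sigma_{t+1|t}=(A_t+\bar A_t)\tilde\Sigma_{t|t}(A_t+\bar A_t)^\intercal+(E_t+\bar E_t)\Sigma^w_t(E_t+\bar E_t)^\intercal,
\end{equation}
which is free of $n$.

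The step I expect to be the main obstacle to state cleanly is the measurement update, since it requires the nonobvious structural fact that the global gain $\bar L_{t+1}$ is itself scale-free. Substituting $\bar\Sigma_{t+1|t}=\frac1n\tilde\Sigma_{t+1|t}$ into the gain formula in~\eqref{eq:KF2}, the common factor $1/n$ in the numerator cancels against the $1/n$ shared by both terms of the inverted innovation covariance, namely the predicted term $(C_{t+1}+\bar C_{t+1})\bar\Sigma_{t+1|t}(C_{t+1}+\bar C_{t+1})^\intercal$ and the measurement-noise term $\frac1n(S_{t+1}+\bar S_{t+1})\Sigma^v_{t+1}(S_{t+1}+\bar S_{t+1})^\intercal$, leaving an $O(1)$ matrix. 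Consequently $\bar\Sigma_{t+1|t+1}=(\mathbf I-\bar L_{t+1}(C_{t+1}+\bar C_{t+1}))\bar\Sigma_{t+1|t}$ multiplies a bounded matrix by an $O(1/n)$ one, so $\tilde\Sigma_{t+1|t+1}$ obeys the same $n$-free update. This closes the induction and proves convergence to zero at rate $1/n$.

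For the \emph{blind optimal global estimator}, I would argue that $\xi_{t|t}=\bar x_t-z_{t|t}$ is a Gaussian with zero conditional mean and covariance $\bar\Sigma_{t|t}\to\mathbf 0$; a degenerate Gaussian equals its mean almost surely, so $\bar x_t=z_{t|t}$ a.s. in the limit. Separately, the deep-state dynamics~\eqref{eq:dynamics_deep} are driven only by the aggregate primitives $\bar x_1,\bar w_{1:t},\bar v_{1:t}$, whose covariances scale as $1/n$ by Lemma~\ref{lemma:Relations_primitive_rv}; hence the unconditional covariance of $\bar x_t$ is also $O(1/n)$ and $\bar x_t\to\Exp{\bar x_t}$ a.s. Combining the two identities yields $\bar x_t=z_{t|t}=\Exp{\bar x_t}$, the claimed certainty-equivalence and mean-field identity.

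Finally, for the \emph{optimal fully decentralized strategy}, since $\bar x_{t+1}=\Exp{\bar x_{t+1}}=z_{t+1|t}$ a.s. and $\bar v_{t+1}\to\mathbf 0$ in the limit, the global innovation $\bar y_{t+1}-(C_{t+1}+\bar C_{t+1})z_{t+1|t}$ vanishes; this both gives the perfect-prediction identity $\bar y_{t+1}=(C_{t+1}+\bar C_{t+1})z_{t+1|t}$ and annihilates the coupling term $\alpha_i(\bar L_{t+1}-L_{t+1})(\,\cdot\,)$ in the combined local filter of Theorem~\ref{thm}. The local estimate $\hat x^i_{t+1|t+1}$ is then updated using only $y^i_{t+1}$, and because $z_{t|t}=\Exp{\bar x_t}$ is deterministic and precomputable offline, the optimal action~\eqref{eq:optimal_strategy} needs no shared information beyond the agent's own observation.
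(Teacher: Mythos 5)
Your proposal is correct and follows essentially the same route as the paper, whose two-line proof simply cites the $1/n$ covariance scaling of sums of independent random variables (the source of the $\frac{1}{n}$ factors and the initial condition $\bar \Sigma_{1|0}=\frac{1}{n}\Sigma^x$ in~\eqref{eq:KF2}) and then reads off $z_{t+1|t+1}=z_{t+1|t}$ for $n=\infty$; your induction on the normalized matrices $n\bar\Sigma_{t|t}$, the degenerate-Gaussian argument for the almost-sure equalities, and the vanishing global innovation are exactly that compressed argument made explicit. Your side observation that the gain $\bar L_{t+1}$ itself remains scale-free and $O(1)$, so that it is the innovation $\bar y_{t+1}-(C_{t+1}+\bar C_{t+1})z_{t+1|t}$ rather than the gain that vanishes, is a correct refinement fully consistent with the theorem's claim about the covariance matrices.
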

\begin{proof}
The proof follows by noting that the covariance matrix of the summation of  any $n$ uniformly bounded independent random variables goes to zero with the rate $1/n$, and that $z_{t+1|t+1}=z_{t+1|t}$ for $n=\infty$, according to~\eqref{eq:KF2}.
\end{proof}

\subsection{Extension to least-square estimation}
The results of Theorems~\ref{thm} and~\ref{thmm} naturally hold for the best linear strategy minimizing  the least-square estimation error without imposing  any  Gaussian assumption~\cite{simon2006optimal}. In such a case,  the proof uses
 Hilbert space analysis with  inner product $\langle x,y\rangle= \Exp{xy^\intercal}$, where $x$ and $y$ are random variables.

\section{Conclusions}\label{sec:conclusions}
 In this paper,  a new class of large-scale decentralized multi-agent systems, called deep structured teams, with noisy measurements was studied. A novel transformation-based approach was proposed  to  introduce a bi-level orthogonal relationship between the agents across both state space and  time horizon. The optimal  solution was shown to be  linear in the local and global estimates and  computed by two standard decoupled scale-free backward and forward equations (i.e., Riccati equations and Kalman filters). In addition, a fully decentralized sub-optimal strategy was developed,   whose performance converges to that of the optimal one at a rate inversely proportional to the number of agents.  The main results of this paper naturally extend to a model with multiple sub-populations and multiple features in a fashion similar to the one proposed in~\cite[Section IV]{Jalal2021CDC_MPC} and~\cite{Jalal2019risk}.
\bibliographystyle{IEEEtran}
\bibliography{Jalal_Ref}
\end{document}